\definecolor{cblue}{rgb}{0.466666666666667,0.925490196078431,0.956862745098039}
\definecolor{lightGreen}{rgb}{0.5373, 0.9333, 0.5843}
\definecolor{yellah}{rgb}{1.0000, 0.9098, 0.4902}
\newcommand{\customlabel}[2]{
\protected@write \@auxout {}{\string \newlabel {#1}{{#2}{}}}}
\newcounter{mytempeqncnt}
\newtheorem{proposition}{Proposition}
\newtheorem{definition}{Definition}
\newcommand{\labels}[1]{\mathcal{L}(#1)}
\newcommand{\F}[1]{\mathcal{F}(#1)}
\newcommand{\minusQuad}{\kern-1em}
\def\Xspace{\ensuremath{\mathbb{X}}\xspace}
\def\Lspace{\ensuremath{\mathbb{L}}\xspace}
\def\Nspace{\ensuremath{\mathbb{N}}\xspace}
\def\Bspace{\ensuremath{\mathbb{B}_+}\xspace}
\def\Tspace{\ensuremath{\mathbb{T}_+}\xspace}
\def\pibold{\ensuremath{\boldsymbol{\pi}}\xspace}
\def\fbold{\ensuremath{\mathbf{f}}\xspace}
\def\Xbold{\ensuremath{\mathbf{X}}\xspace}
\def\xbold{\ensuremath{\mathbf{x}}\xspace}
\def\B{\ensuremath{\mathrm{B}}\xspace}
\def\S{\ensuremath{\mathrm{S}}\xspace}
\def\T{\ensuremath{\mathrm{T}}\xspace}
\def\D{\ensuremath{\mathrm{D}}\xspace}
\def\hex{\ensuremath{{(h)}}\xspace}
\def\htex{\ensuremath{{(h,t)}}\xspace}
\def\tab{\hspace{12pt}}
\def\pS{\ensuremath{p_\S(x,\ell)}\xspace}
\def\qS{\ensuremath{q_\S(x,\ell)}\xspace}
\def\pD{\ensuremath{p_\D(x,\ell)}\xspace}
\def\qD{\ensuremath{q_\D(x,\ell)}\xspace}
\def\splus{\ensuremath{{\S,+}}\xspace}
\def\bplus{\ensuremath{{\B,+}}\xspace}
\def\dplus{\ensuremath{{\D,+}}\xspace}
\def\tplus{\ensuremath{{\T,+}}\xspace}
\acrodef{rfs}[RFS]{\textit{random finite set}}
\acrodef{glmb}[GLMB]{Generalized Labeled Multi-Bernoulli}
\acrodef{lmb}[LMB]{Labeled Multi-Bernoulli}
\acrodef{fisst}[FISST]{Finite Set Statistics}
\acrodef{glmb}[GLMB]{Generalized Labeled Multi-Bernoulli}
\acrodef{phd}[PHD]{Probability Hypothesis Density}
\acrodef{kld}[KLD]{Kullback-Leibler divergence}
\acrodef{resp}[resp.]{respectfully}
\acrodef{mcmc}[MCMC]{Markov Chain Monte Carlo}
\acrodef{ssa}[SSA]{Space Sitational Awareness}
\acrodef{mht}[MHT]{Multiple Hypotheses Tracking}
\acrodef{jpda}[JPDA]{Joint Probabilistic Data Association}
\acrodef{ospa}[OSPA]{Optimal Sub-Pattern Assignment}
\acrodef{pdf}[pdf]{probability density function}
\acrodef{wrt}[w.r.t.]{with respect to}
\begin{document}

\title{A Generalized Labeled Multi-Bernoulli Filter with Object Spawning}
\author{ Daniel~S.~Bryant, Ba-Tuong~Vo, Ba-Ngu~Vo, and Brandon~A.~Jones 
\thanks{
D. S. Bryant is with the Department of Aerospace Engineering Sciences,
University of Colorado Boulder, Boulder, CO 80309 USA (e-mail:
daniel.bryant@colorado.edu). } \thanks{
B.-T. Vo and B.-N. Vo are with the Department of Electrical and Computer
Engineering, Curtin University, Bentley, WA 6102, Australia (e-mail:
ba-tuong.vo@curtin.edu.au; ba-ngu.vo@curtin.edu.au). } \thanks{
B. A. Jones is with the Department of Aerospace Engineering and Engineering
Mechanics, The University of Texas Austin, Austin, TX, 78712 USA (e-mail:
brandon.jones@utexas.edu). } }
\maketitle

\begin{abstract}
Previous labeled random finite set filter developments use a motion model that
only accounts for survival and birth. While such a model provides the means for a
multi-object tracking filter such as the \ac{glmb}
filter to capture object births and deaths in a wide variety of
applications, it lacks the capability to capture spawned tracks and their
lineages. In this paper, we propose a new \ac{glmb} based filter that
formally incorporates spawning, in addition to birth. This formulation
enables the joint estimation of a spawned object's state and information
regarding its lineage. Simulations results demonstrate the efficacy of the
proposed formulation.
\end{abstract}

%%% Abstract %%%

%%% Keywords %%%
\begin{IEEEkeywords}
	Random finite sets, generalized labeled multi-Bernoulli filter, object spawning, multi-target tracking, multi-object filtering, Bayesian estimation.
\end{IEEEkeywords}

%%% Misc %%%
\acresetall 
% Reset acronym counter (acronyms will be written explicitly the first time they appear in the main body, even if they already appear in the abstract)

%%% Sections %%%

%%%%%%%%%%%%%%%%%%%%%%%%%%%%%%%%%%%%%%%%%%
%                                        %
%	INTRODUCTION                         %
%                                        %
%%%%%%%%%%%%%%%%%%%%%%%%%%%%%%%%%%%%%%%%%%
\section{Introduction}

Multi-object tracking is concerned with estimating the number of objects and their trajectories in the presence of object appearance/disappearance, clutter, and uncertainties in detection, measurements, and data associations. The field covers a wide variety of applications which include aviation \cite{1996_MTT_aviation}, astrodynamics, \cite{jonesBryantFusion,hecker1995expert,schumSADIE}, defense \cite{blackman1999modern}, robotics \cite{MullaneVo2011_RFS_SLAM,lee2013slam}, and cell biology \cite{Meijering2012183_[cell_tracking]}, \cite{reza2012visual}. Three of the most prominent approaches to multi-object filtering are \ac{mht} \cite{reid1979mht,kurien1990issues,BlackmanMHT2004,mallick2012}, \ac{jpda} \cite{bar-shalom2011_yellow}, and \acp{rfs} \cite{mahler2007statistical,mahler2014advances}.

In a multi-object system, new objects appear either from spontaneous birth or spawning from existing objects. While many existing multi-object tracking approaches can accommodate spontaneous birth, so far only the \ac{rfs} approach offers a principled treatment of spawning in terms of modeling and estimation \cite{mahler2003}, \cite{mahler2007statistical}. Central to the \ac{rfs} approach is the \textit{multi-object Bayes recursion} \cite{mahler2007statistical} which recursively propagates the multi-object posterior forward in time. The first order approximation more commonly known as the \ac{phd} filter accounts for both birth and spawning \cite{mahler2003}. However, its generalization, the Cardinalized PHD (CPHD) filter, was derived without a spawning model \cite{mahler2007phd}. The CPHD filter with spawning is generally intractable and approximations were derived in \cite{Lundgren2013}, \cite{Bryant_CPHD_Spawn}.

Information on lineage or ancestry is an important aspect of tracking multiple spawning objects. For example, in (biological) cell tracking, information on a cell's lineage is important to the analyis of cell behavior \cite{Li2008,becker2012morphology_[cell_tracking],Meijering2012183_[cell_tracking],Baker20140386_[cell_tracking]}. For space situational awareness, information on the ancestry of debris is important to the analysis of fragmentation events \cite{BryantJones_Napa_2016,schumacher1996tracking,DemarsBreakup2015}; moreover, country of origin and launch site, information required to add a space object to the United States Strategic Command (USSTRATCOM) catalog \cite{spaceTrackFAQ}, can be derived from ancestry information. Even with spawning models, the PHD/CPHD filters \cite{mahler2003}, \cite{Lundgren2013}, \cite{Bryant_CPHD_Spawn} only provide estimates of spawned objects' states, but no information on their ancestries. Further, in applications where ancestry information is not required, it is also possible to estimate spawned objects using \ac{rfs}-based multi-object filters with measurement-driven (spontaneous) birth models \cite{ristic2012adaptive}, \cite{mahlermaroulas2013}. Hence, a complete treatment of modeling and estimation for spawning objects should address the issue of ancestry.

\textit{Labeled \acp{rfs}} enable ancestry information to be incorporated into the modeling and estimation of spawning objects. Approximate multi-object Bayes filters such as the \ac{phd} \cite{mahler2003}, CPHD \cite{mahler2007phd}, and multi-Bernoulli \cite{mahler2007statistical, Vo2010jointDetect, vo2009cardinality} filters were not formulated to estimate tracks, without which the ancestries of the objects are, conceptually, not traceable. On the other hand, labeled \acp{rfs} provide the means for identifying and estimating individual object tracks \cite{Vo_dGLMB_ConjPrior_2013}, thereby making it possible, conceptually, to trace their ancestors. Furthermore, as demonstrated in this paper, the labels used to identify individual tracks can also be encoded with ancestry information, which can be assimilated by \ac{rfs} spawning models, and subsequently inferred from the labels obtained using labeled \ac{rfs} estimation techniques.

Under the labeled \ac{rfs} formulation, the multi-object Bayes recursion (without spawning) admits an analytic solution known as the \ac{glmb} filter \cite{Vo_dGLMB_ConjPrior_2013,Vo_dGLMB_Sequel_2014}, which can be implemented with linear complexity in the number of measurements and quadratic in the number of hypothesized tracks \cite{Vo_2016_fast_dGLMB}. This on-line multi-object tracker is based on the \ac{glmb} family of conjugate priors that enjoys a number of nice analytical properties, e.g., the void probability functional--a necessary and sufficient statistic--of a \ac{glmb}, the Cauchy-Schwarz divergence between two \acp{glmb} \cite{beard2017}, the $L_{1}$-distance between a \ac{glmb} and its truncation, can all be computed in closed form \cite{Vo_dGLMB_Sequel_2014}. Of direct relevance to this work is the fact that the \ac{glmb} family is flexible enough to approximate any labeled \ac{rfs} density with matching intensity function and cardinality distribution \cite{Papi2015}.

In this paper, we propose a new \ac{glmb} based filter that formally incorporates spawning, in addition to birth.\ Using labeled \acp{rfs} we encode ancestry information into the labels of individual object states and propose a labeled \ac{rfs} spawning model. When a track is instantiated by spontaneous birth, its label contains information pertaining to when an object is born and from which birth region \cite{Vo_dGLMB_ConjPrior_2013}. Similarly, for a track instantiated by spawning, its label contains information pertaining to when and from which parent it originated. Under such a spawning model, the multi-object prediction and filtering densities are no longer \acp{glmb}, even if the initial prior is a \ac{glmb}. To derive a tractable filter, following \cite{Vo_2016_fast_dGLMB} we combine the prediction and update into a single step and approximate the labeled multi-object filtering density by a \ac{glmb} with matching first moment and cardinality using the technique in \cite{Papi2015}. The result is a recursion that propagates the \ac{glmb} approximation of the labeled multi-object filtering density, from which the states of the spawned objects and their labels, hence lineage, can be jointly inferred.

The remainder of this paper is arranged as follows: In Section~\ref{sec:background} background is provided on the \ac{glmb} filter, its use in approximation of general labeled \ac{rfs} densities, and its efficient implementation. In Section~\ref{sec:newStuff} the derivation, approximation, and joint prediction and update of object spawning inclusive \ac{glmb} densities is developed. Simulation results are presented in Section~\ref{sec:simulation} and in Section~\ref{sec:conclusion} concluding remarks are given.

%%%%%%%%%%%%%%%%%%%%%%%%%%%%%%%%%%%%%%%%%%
%                                        %
%	BACKGROUND                           %
%                                        %
%%%%%%%%%%%%%%%%%%%%%%%%%%%%%%%%%%%%%%%%%%
\section{Background\label{sec:background}}

This section provides background on \ac{glmb} filter implementation pertinent to the formulation of our multi-object filtering problem.

	%%%%%%%%%%%%%%%%%%%%%%%%%%%%%%%%%%%%%%%%%%%%%%%%%%%%%%%%%%%%%%%%%%%%%%%%%%%%%%%%%%%%%%%%%%%%%%%%%%%%%%%%%%%%%%%%%%%%%%%%%%%%%%
	%                                        																					 %
	%%%%%%%%%%%%%%%%%%%%%%%%%%%%%%%%%%%%%%%%%%%%%%%%%%%%%%%%%%%%%%%%%%%%%%%%%%%%%%%%%%%%%%%%%%%%%%%%%%%%%%%%%%%%%%%%%%%%%%%%%%%%%%
	\subsection{Labeled Random Finite Sets}

	Instead of a single system state $x$ in the state space $\mathbb{X}\xspace$, in a multi-object system we consider a finite set $X\subset \mathbb{X} \xspace$ as the \textit{multi-object state}. Further, in a Bayesian framework the multi-object state is modeled as an \ac{rfs}, i.e., a finite-set-valued random variable \cite{mahler2003}. An \ac{rfs}, also known as a simple finite point process, consists of a random number of points that are, themselves, random and unordered. An \ac{rfs} can be described by the  \textit{multi-object density--}defined to be\textit{\ }the set derivative of its belief functional \cite{mahler2007statistical}--shown to be equivalent to a probability density in \cite{Vo_SMCPHD}.
	
	A \emph{labeled \ac{rfs}} is a marked simple finite point process with state space $\mathbb{X}\xspace$ and discrete mark space $\mathbb{L}$, such that each realization has distinct marks \cite{Vo_dGLMB_ConjPrior_2013}, \cite{Vo_dGLMB_Sequel_2014}. The distinct marks or labels provide the means to identify trajectories or tracks of individual objects since a trajectory is a time-sequence of states with the same label. Let $\mathcal{L}:\mathbb{X}\xspace\times \mathbb{L}\xspace\rightarrow \mathbb{L}\xspace$ be the projection $\mathcal{L}((x,\ell ))=\ell $, then the labels of realization $\mathbf{X}\xspace\subset \mathbb{X}\xspace\times \mathbb{L}\xspace$ are then $\mathcal{L}(\mathbf{X}\xspace)=\left\{ \mathcal{L}(\mathbf{x}\xspace):\mathbf{x}\xspace\in \mathbf{X}\xspace\right\} $. The realization $\mathbf{X}$ is said to have \textit{distinct labels} if and only if it has the the same cardinality as its labels $\mathcal{L}(\mathbf{X}\xspace)$. This concept is compactly formulated by the \textit{distinct label indicator} defined by \cite{Vo_dGLMB_ConjPrior_2013}, \cite{Vo_dGLMB_Sequel_2014} 
	\begin{equation*}
	\Delta (\mathbf{X}\xspace)=\delta _{|\mathbf{X}\xspace|}\left( |\mathcal{L}(%
	\mathbf{X}\xspace)|\right) ,
	\end{equation*}%
	where $|X|$ denotes the cardinality of a finite set $X$, and
	\begin{equation*}
	\delta _{Y}(X)\triangleq 
	\begin{cases}
	1,\text{ if }X=Y, \\ 
	0,\text{ otherwise,}%
	\end{cases}%
	\end{equation*}%
	denotes a generalization of the Kroneker delta that takes arbitrary arguments such as integers, sets, vectors etc.
	
	Throughout this paper we adhere to the convention that lower case letters represent single-object states, e.g., $x,\mathbf{x}\xspace$, while upper case letters represent multi-object states, e.g., $X,\mathbf{X}\xspace$. Bold symbols represent labeled states and their distributions/statistics, e.g., $\mathbf{x}\xspace,\mathbf{X}\xspace,\boldsymbol{\pi }\xspace$, etc., to distinguish them from unlabeled ones. Blackboard letters represent spaces, e.g., $\mathbb{X}\xspace,\mathbb{Z}\xspace,\mathbb{L}\xspace,\mathbb{N}\xspace$. We use the standard inner product notation $\langle f,g\rangle \triangleq \int f(x)g(x)dx$, the multi-object exponential notation $h^{X}\triangleq \prod_{x\in X}h(x)$, where $h$ is a real-valued function and $h^{\emptyset }=1$ by convention, and the inclusion function notation, a generalization of the indicator function, by 
	\begin{equation*}
	1_{Y}(X)\triangleq 
	\begin{cases}
	1,\text{ if }X\subseteq Y, \\ 
	0,\text{ otherwise.}%
	\end{cases}%
	\end{equation*}%
	Additionally, the list of variables $X_{m},X_{m+1},...,X_{n}$ is abbreviated as $X_{m:n}$ and where it is convenient we let the symbol $+$ denote the time index at the \textit{next time} and its absence denote the time index at the \textit{current time}, e.g., the state $x_{k}$ at the current time and the state $x_{k+1}$ at the next time can equivalently be denoted as $x$ and $x_{+}$, respectively.

	%%%%%%%%%%%%%%%%%%%%%%%%%%%%%%%%%%%%%%%%%%%%%%%%%%%%%%%%%%%%%%%%%%%%%%%%%%%%%%%%%%%%%%%%%%%%%%%%%%%%%%%%%%%%%%%%%%%%%%%%%%%%%%
	%                                        																					 %
	%%%%%%%%%%%%%%%%%%%%%%%%%%%%%%%%%%%%%%%%%%%%%%%%%%%%%%%%%%%%%%%%%%%%%%%%%%%%%%%%%%%%%%%%%%%%%%%%%%%%%%%%%%%%%%%%%%%%%%%%%%%%%%
	\subsection{Generalized Labeled Multi-Bernoulli}

	A \ac{glmb} is a labeled \ac{rfs} with state space $\mathbb{X}$\xspace and label space $\mathbb{L}$\xspace distributed according to \cite{Vo_dGLMB_ConjPrior_2013,Vo_dGLMB_Sequel_2014} 
	\begin{equation} \label{eq:dglmbDist}
	\pibold(\Xbold) = \Delta(\Xbold)\sum_{(I,\xi)\in\F{\Lspace}\times\Xi}w^{(I,\xi)}\delta_I(\labels{\Xbold})\left[p^{(\xi)}\right]^\Xbold
	\end{equation}
	where $\Xi$ is a given discrete space, each $p^{(\xi )}(\cdot ,\ell )$ is a (single-object) probability density on \Xspace (i.e., $\int p^{(\xi )}(x,\ell )dx=1$ with each $x\in \Xspace$ denoting a single-object state and each $\ell\in\Lspace$ denoting a distinct label), and each $w^{(I,\xi )}$ is non-negative such that 
	\begin{equation} \label{eq:wsumto1}
	\sum_{I\in\F{\Lspace}}\sum_{\xi\in\Xi}w^{(I,\xi)}(L)=1.
	\end{equation}
	Each \ac{glmb} density component $(I,\xi)$ in \eqref{eq:dglmbDist} consists of a weight $w^{(I,\xi)}$ that depends solely on the labels of the multi-object state \Xbold and the multi-object exponential $\left[ p^{(\xi )}\right]^{\Xbold}$, which is a product of single-object probability densities.
	
	Also relevant to this work is the \ac{lmb}. An \ac{lmb} \Xbold defined on $\Xspace\times\Lspace$ is an \ac{rfs} with parameter set $\{(r^{(\zeta)},p^{(\zeta)}):\zeta\in\Psi\}$ distributed according to \cite{Vo_dGLMB_ConjPrior_2013}
	\begin{equation}
	\pibold(\Xbold) = \Delta(\Xbold)1_{\alpha(\Psi)}(\labels{\Xbold})\left[\Phi(\Xbold,\cdot)\right]^\Psi
	\end{equation}
	where \mbox{$\alpha : \Psi\rightarrow\Lspace$} is a 1-1 mapping (usually an identity mapping) and 
	\begin{align}
	\Phi(\Xbold,\zeta) = \sum_{(x,\ell)\in\Xbold}&\delta_{\alpha(\zeta)}(\ell)r^{(\zeta)}p^{(\zeta)}(x)\nonumber
	\\
	&+\left(1-1_{\labels{\Xbold}}(\alpha(\zeta))\right)\left(1-r^{(\zeta)}\right).
	\end{align}

	%%%%%%%%%%%%%%%%%%%%%%%%%%%%%%%%%%%%%%%%%%%%%%%%%%%%%%%%%%%%%%%%%%%%%%%%%%%%%%%%%%%%%%%%%%%%%%%%%%%%%%%%%%%%%%%%%%%%%%%%%%%%%%
	%                                        																					 %
	%%%%%%%%%%%%%%%%%%%%%%%%%%%%%%%%%%%%%%%%%%%%%%%%%%%%%%%%%%%%%%%%%%%%%%%%%%%%%%%%%%%%%%%%%%%%%%%%%%%%%%%%%%%%%%%%%%%%%%%%%%%%%%
	\subsection{Multi-object Bayes Filter}
	Using the convention detailed in \cite{Vo_dGLMB_ConjPrior_2013}, a label $\ell =(t,i)$ in the space \Lspace of labels at the current time $k$ is an ordered pair, where the first term $t\leq k$ denotes time of birth, and the second term $i\in \Nspace$ is a unique index distinguishing objects born at the same time. Birth labels at the next time belong to the space $\Bspace=\{(k+1,i):i\in \Nspace\}$, hence $\Lspace\cap\Bspace=\emptyset $ and the label space at the next time is $\Lspace_{+}=\Lspace\cup \Bspace$.

	The history $\Xbold_{0:k}$ of labeled multi-object states contains the set of all trajectories up to time $k$. All information on the set of trajectories conditioned on the observation history $Z_{1:k}$, is captured in the \emph{multi-object posterior density }$\pibold_{0:k}(\mathbf{\cdot }|Z_{1:k}\mathbf{)}$, which incorporates the evolution of the multi-object state via the \textit{multi-object transition density}, as well as the observed data via the \textit{multi-object likelihood} \cite{mahler2007statistical}.

		%%%%%%%%%%%%%%%%%%%%%%%%%%%%%%%%%%%%%%%%%%%%%%%%%%%%%%%%%%%%%%%%%%%%%%%%%%%%%%%%%%%%%%%%%%%%%%%%%%%%%%%%%%%%%%%%%%%%%%%%%%%%%%
		%                                        																					 %
		%%%%%%%%%%%%%%%%%%%%%%%%%%%%%%%%%%%%%%%%%%%%%%%%%%%%%%%%%%%%%%%%%%%%%%%%%%%%%%%%%%%%%%%%%%%%%%%%%%%%%%%%%%%%%%%%%%%%%%%%%%%%%%
		\subsubsection{Multi-Object Transition Model}

		The \textit{multi-object transition density} $\fbold_{+}(\mathbf{\cdot }| \Xbold)$ models the evolution of a given multi-object state \Xbold to the next time and encapsulates all information pertaining to loss of objects via thinning, movement of surviving objects via Markov shifts, and appearance of new objects via superposition.

		Given a single-object state $\xbold\in\Xbold$ at the current time, an object either survives to the next time with probability \pS and moves to a new state $(x_{+},\ell_{+}) $ with probability density $f_\splus(x_+ | x,\ell)\delta_\ell(\ell_+)$, or dies with probability $\qS = 1-\pS$. Assuming that, conditional on \Xbold, the transition of kinematic states $\xbold\in\Xbold$ are mutually independent, we model the set $\Xbold_\splus$ of surviving objects at the next time as a conditional \ac{lmb} \ac{rfs} distributed according to \cite{Vo_dGLMB_ConjPrior_2013} 
		\begin{align} \label{eq:survKern}
		\fbold_\splus(&\Xbold_\splus|\Xbold) = \nonumber
		\\
		&\Delta(\Xbold)\Delta(\Xbold_\splus)1_{\labels{\Xbold}}(\labels{\Xbold_\splus})\left[\Phi_\splus(\Xbold_\splus|\cdot)\right]^\Xbold
		\end{align}
		where 
		\begin{align*}
		\Phi_\splus(\Xbold_\splus|x,\ell) = \!\!\!\!\sum_{(x_+,\ell_+)\in\Xbold_\splus} \!\!\!\! &\delta_\ell(\ell_+)\pS f_\splus(x_+|x,\ell) \nonumber
		\\
		&+ \left[1-1_{\labels{\Xbold_\splus}}(\ell)\right]\qS.
		\end{align*}
		
		A new object with state $(x_{+},\ell _{+})$ appears at the next time with probability $r_\bplus(\ell_+)$ and probability density $p_\bplus(x_+,\ell_+)$, or does not with probability $1-r_\bplus(\ell_+)$. Modeling object birth as an \ac{lmb} \ac{rfs}, the set $\Xbold_\bplus$ of new objects born at the next time is distributed according to \cite{Vo_dGLMB_ConjPrior_2013}
		\begin{equation} \label{eq:birthKern}
		\fbold_\bplus(\Xbold_\bplus)=\Delta(\Xbold_\bplus)w_\bplus(\labels{\Xbold_\bplus})\left[p_\bplus\right]^{\Xbold_\bplus}
		\end{equation}
		where $w_\bplus(L)=1_{\Bspace(L)}\left[1-r_\bplus\right]^{\Bspace-L}\left[r_\bplus\right]^L$.

		The multi-object state at the next time is the superposition of birth and surviving objects, i.e., $\Xbold_+=\Xbold_\splus \cup \Xbold_\bplus$, and since $\Lspace\cap\Bspace=\emptyset $, \textit{labeled} birth and surviving objects are mutually independent. Thus, the multi-object transition kernel ultimately reduces to the product of birth and survival transition densities \cite{Vo_dGLMB_ConjPrior_2013} 
		\begin{equation} \label{eq:sbKern}
		\fbold_+(\Xbold_+|\Xbold) = \fbold_\splus(\Xbold_+\cap(\Xspace\times\Lspace)|\Xbold)\fbold_\bplus(\Xbold_+-(\Xspace\times\Lspace)),
		\end{equation}
		where $\Xbold_+\cap(\Xspace\times\Lspace)$ is the subset of $\Xbold_+$ consisting of surviving objects.

		%%%%%%%%%%%%%%%%%%%%%%%%%%%%%%%%%%%%%%%%%%%%%%%%%%%%%%%%%%%%%%%%%%%%%%%%%%%%%%%%%%%%%%%%%%%%%%%%%%%%%%%%%%%%%%%%%%%%%%%%%%%%%%
		%                                        																					 %
		%%%%%%%%%%%%%%%%%%%%%%%%%%%%%%%%%%%%%%%%%%%%%%%%%%%%%%%%%%%%%%%%%%%%%%%%%%%%%%%%%%%%%%%%%%%%%%%%%%%%%%%%%%%%%%%%%%%%%%%%%%%%%%
		\subsubsection{\mbox{Multi-object Measurement Model}\label{sec:measModel}}%
		The multi-object likelihood is a multi-object density $g(\cdot |\mathbf{X}\xspace)$ that models the \textit{multi-object observation} generated by a given multi-object state $\mathbf{X}$\xspace, and	encapsulates all information pertaining to missed detections via thinning, detections (observations of detected objects) via Markov shifts and clutter (false observations) via superposition.

		The multi-object observation $Z=\left\{ z_{1},\ldots ,z_{|Z|}\right\} $ is the superposition of detections and clutter. Each state $(x,\ell )\in \mathbf{X}\xspace$ is either detected with probability \pD and generates an observation $z\in Z$ with likelihood $g(z|x,\ell )$ or missed with probability $\qD = 1-\pD$. The multi-object likelihood is given by \cite{Vo_dGLMB_ConjPrior_2013,Vo_dGLMB_Sequel_2014} 
		\begin{equation} \label{eq:multLikeGLMB}
		g(Z|\Xbold) \propto \sum_{\theta \in\Theta(\mathcal{L}(\Xbold))}\prod_{(x,\ell)\in\Xbold}\psi^{(\theta(\ell))}(x,\ell|Z)
		\end{equation} 
		where 
		\begin{align*}
		\psi^{(j)}\left(x,\ell|\left\{Z_{1:|Z|}\right\}\right) = & \;\delta_0(j)\qD \nonumber
		\\
		&\hspace{5pt}+ (1-\delta_0(j))\frac{\pD g(z_j|x,\ell)}{\kappa(z_j)},
		\end{align*}
		$\kappa (\cdot )$ is the intensity of Poisson clutter, and $\Theta (L)$ denotes the space of mappings $\theta\!\!\!:\!\!\!L\!\!\rightarrow\!\! \{0\!\!:\!\!|Z|\}$ that are 1-1 when restricting the range to the positive integers, i.e., $\theta(i)=\theta (j)>0\text{ implies }i=j$.

		%%%%%%%%%%%%%%%%%%%%%%%%%%%%%%%%%%%%%%%%%%%%%%%%%%%%%%%%%%%%%%%%%%%%%%%%%%%%%%%%%%%%%%%%%%%%%%%%%%%%%%%%%%%%%%%%%%%%%%%%%%%%%%
		%                                        																					 %
		%%%%%%%%%%%%%%%%%%%%%%%%%%%%%%%%%%%%%%%%%%%%%%%%%%%%%%%%%%%%%%%%%%%%%%%%%%%%%%%%%%%%%%%%%%%%%%%%%%%%%%%%%%%%%%%%%%%%%%%%%%%%%%
		\subsubsection{Multi-object Bayes recursion}
		While the multi-object posterior density can be approximated by Markov Chain Monte Carlo \cite{Vu2014,Craciun2015}, these techniques are still expensive and not suitable for on-line applications. For real-time tracking, a more tractable alternative is the marginal $\boldsymbol{\pi }(\mathbf{\cdot)\triangleq }\boldsymbol{\pi }_{k}(\mathbf{\cdot}|Z_{1:k}\mathbf{)}$ called the \emph{multi-object filtering density}, which can be recursively propagated by the \emph{multi-object Bayes filter} \cite{mahler2003}, \cite{mahler2007statistical}
		\begin{equation} \label{Chap_Kolmo_LRFS}
		\pibold_+(\Xbold_+)\propto g_+(Z_+|\Xbold_+)\int \fbold_+(\Xbold_+|\Xbold)\pibold(\Xbold)\delta\Xbold,
		\end{equation}
		where the integral is a \emph{set integral} defined for any function $\fbold : \mathcal{F}(\Xspace\times\Lspace_k)\rightarrow \mathbb{R}$ by 
		\begin{equation*}
		\int \fbold(\Xbold)\delta\Xbold = \sum_{i=0}^\infty \frac{1}{i!}\int\fbold(\{\xbold_1,\ldots,\xbold_i\})d(\xbold_1,\ldots,\xbold_i).
		\end{equation*}
		Note that Bayes optimal multi-object estimators can be formulated by minimizing the Bayes risk, e.g., the marginal multi-object estimator \cite{mahler2007statistical}.
		
		An analytic solution to the labeled Bayes multi-object filter (\ref{Chap_Kolmo_LRFS}), known as the \emph{Generalized Labeled Multi-Bernoulli} (GLMB) filter, was derived in \cite{Vo_dGLMB_ConjPrior_2013}, while a particle approximation was implemented in \cite{PapiKim2015} using the generic multi-object particle algorithm proposed in \cite{Vo_SMCPHD}.

	%%%%%%%%%%%%%%%%%%%%%%%%%%%%%%%%%%%%%%%%%%%%%%%%%%%%%%%%%%%%%%%%%%%%%%%%%%%%%%%%%%%%%%%%%%%%%%%%%%%%%%%%%%%%%%%%%%%%%%%%%%%%%%
	%                                        																					 %
	%%%%%%%%%%%%%%%%%%%%%%%%%%%%%%%%%%%%%%%%%%%%%%%%%%%%%%%%%%%%%%%%%%%%%%%%%%%%%%%%%%%%%%%%%%%%%%%%%%%%%%%%%%%%%%%%%%%%%%%%%%%%%%
	\subsection{Fast GLMB Filter Implementation \label{sec:fastImplBack}}
	The first \ac{glmb} filter implementation consists of prediction and update stages, each requiring independent truncations of \ac{glmb} densities \cite{Vo_dGLMB_ConjPrior_2013,Vo_dGLMB_Sequel_2014}. Alternatively, a substantially more efficient implementation of the \ac{glmb} filter \cite{Vo_2016_fast_dGLMB}, hereafter referred to as the \textit{fast} \ac{glmb} implementation, employs a single joint prediction/update stage requiring only one truncation procedure. This work employs the fast \ac{glmb} implementation, thus for convenience, we introduce pertinent expressions and conventions for \ac{glmb} joint prediction/update and formulation of the \ac{glmb} truncation problem originally presented in \cite{Vo_2016_fast_dGLMB}. We expand on this material in Section~\ref{sec:Implement} to incorporate spawning.

	Given the \ac{glmb} filtering density \eqref{eq:dglmbDist} at the current time, the \ac{glmb} filtering density at the next time is given by \cite{Vo_2016_fast_dGLMB}\footnote{In the interest of simplifying notation, note that $\sum\limits_{(I,\xi )\in\mathcal{F}(\mathbb{L}\xspace)\times\Xi}a^{(I,\xi)}=\sum\limits_{I,\xi}a^{(I,\xi )}$ when the definitions $I\in \mathcal{F}(\mathbb{L}\xspace)$ and $\xi \in \Xi $ are provided.} 
	\begin{align} \label{eq:fastJointDensity}
	\pibold_{+}(\Xbold_+) \propto &\Delta(\Xbold_+)\sum_{I,\xi,I_+,\theta_+}w^{(I,\xi)}w^{(I,\xi,I_+,\theta_+)}(Z_+)\nonumber
	\\
	&\times\delta_{I_+}(\labels{\Xbold_+})\left[p_+^{(\xi,\theta_+)}(\cdot | Z_+)\right]^{\Xbold_+}
	\end{align}
	where $I\in\F{\Lspace}$, $\xi\in\Xi$, $I_+\in\F{\Lspace_+}$, $\theta_+\in\Theta_+(I_+)$, and	
	\begin{align}
	\hspace*{-8pt}w^{(I,\xi,I_+,\theta_+)}(Z_+) &= \!\left[1-\bar{p}_\S^{(\xi)}\right]^{I-I_+}\left[\bar{p}_\S^{(\xi)}\right]^{I\cap I_+}\nonumber
	\\
	&\hspace{10pt}\times\left[1-r_\bplus\right]^{\Bspace-I_+}\!\left[r_\bplus\right]^{\Bspace\cap I_+}\nonumber
	\\
	&\hspace{10pt}\times\left[\bar{\psi}_+^{(\xi,\theta_+)}(\cdot |Z_+)\right]^{I_+},
	\\
	\hspace*{-8pt}\bar{p}_\S^{(\xi)}(\ell) &= \left\langle p_\S(\cdot,\ell),p^{(\xi)}(\cdot,\ell) \right\rangle,\label{eq:barP_S}
	\\
	\hspace*{-8pt}\bar{\psi}_+^{(\xi,\theta_+)}\!(\ell_+|Z_+\!) \!&=\! \left\langle \bar{p}_+^{(\xi)}(\cdot,\ell_+),\psi_+^{(\theta_+(\ell_+))}\!(\cdot,\ell_+|Z_+)\! \right\rangle,
	\\
	\hspace*{-8pt}p_+^{(\xi,\theta_+)}\!(x_+,\!\ell_+|Z_+\!) \!&=\! \frac{\bar{p}_+^{(\xi)}\!(x_+,\!\ell_+\!)\psi_+^{(\theta_+(\ell_+))}\!(x_+,\!\ell_+|Z_+\!)}{\bar{\psi}_+^{(\xi,\theta_+)}(\ell_+|Z_+)},
	\\
	\hspace*{-8pt}\bar{p}_+^{(\xi)}(x_+,\ell_+) &=1_{\Bspace}(\ell_+)p_\bplus(x_+,\ell_+) + 1_\Lspace(\ell_+) \nonumber
	\\
	&\hspace{-10pt}\times\frac{\left\langle p_\S(\cdot,\ell_+)f_\splus(x_+|\cdot,\ell_+),p^{(\xi)}(\cdot,\ell_+)\right\rangle}{\bar{p}_\S^{(\xi)}(\ell_+)}. \label{eq:14wouldNotFit}
	\end{align}
	
	Though \eqref{eq:fastJointDensity} is not strictly \ac{glmb}, it does take on \ac{glmb} form when rewritten as a sum over $I_{+},\xi ,\theta _{+}$ with weights \cite{Vo_2016_fast_dGLMB} 
	\begin{equation}
	w_{_{+}}^{(I_{+},\xi ,\theta _{+})}(Z_{+})\propto \sum_{I}w^{(I,\xi
	)}w^{(I,\xi ,I_{+},\theta _{+})}(Z_{+}).  \label{eq:makeDGLMB}
	\end{equation}

	Efficient implementation of the GLMB recursion \eqref{eq:fastJointDensity} is achieved by propagating only the components with significant $w^{(I,\xi,I_{+},\theta _{+})}(Z_{+})$ through time, i.e., for each component $(I,\xi) $ from the \ac{glmb} density at the current time and a multi-object observation $Z_{+}$ at the next time, the set of pairs $(I_{+},\theta_{+})\in \mathcal{F}(\mathbb{L}\xspace)\times \Theta _{+}(I_{+})$ with significant $w^{(I,\xi ,I_{+},\theta _{+})}(Z_{+})$ are retained while the rest are discarded. The truncation procedure is described as follows.

	Consider a fixed component $(I,\xi )$, and enumerate $Z_{+}=\{z_{1:|Z_{+}|}\} $, $\mathbb{B}_{+}\xspace=\{\ell _{1:K}\}$, and $I=\{\ell _{K+1:P}\}$. For each pair $(I_{+},\theta_{+})\in \mathcal{F}(\mathbb{L}\xspace)\times\Theta _{+}(I_{+})$, an equivalent $P$-dimensional vector representation $\gamma =(\gamma _{1:P})\in \{-1:|Z_{+}|\xspace\}^{P}$ is defined as 
	\begin{equation} \label{eq:gamma}
	\gamma _{i}=
	\begin{cases}
	\theta _{+}(\ell _{i}), & \text{ if }\ell _{i}\in I_{+}, \\ 
	-1, & \text{ otherwise.}
	\end{cases}
	\end{equation}
	Note that $\gamma $ inherits the positive 1-1 property from $\theta _{+}$, and that $I_{+}$ and $\theta _{+}:I_{+}\rightarrow \{0:|Z_{+}|\xspace\}$ can be recovered by 
	\begin{equation}
	I_{+}=\left\{ \ell _{i}\in \mathbb{B}_{+}\xspace\cup \ I:\gamma _{i}\geq 0\right\} ,\hspace{1cm}\theta _{+}(\ell _{i})=\gamma _{i}.
	\end{equation}
	Assuming that, for all $i\in \{1:P\}$, $\bar{p}_\S^{(\xi)}(\ell _{i})\in (0,1)$ and $\bar{p}_\dplus^{(\xi)}\xspace(\ell _{i})\triangleq\left\langle p_\dplus(\cdot ,\ell_{i}),\bar{p}_{+}^{(\xi )}(\cdot ,\ell _{i})\right\rangle \in (0,1)$ , we define for each $j\in \{-1:|Z_{+}|\xspace\}$ 
	\begin{equation} \label{eq:origCostMatrix}
	\hspace*{-5pt}\eta _{i}(j)=
	\begin{cases}
	1-r_\bplus(\ell _{i}), &\!\!\!  1\!\leq\! i\!\leq\! K,j\!<\!0, \\ 
	r_\bplus(\ell _{i})\bar{\psi}_{+}^{(\xi ,j)}(\ell_{i}|Z_{+}), &\!\!\! 1\!\leq\! i\!\leq\! K,j\!\geq\! 0, \\ 
	1-\bar{p}_\S^{(\xi)}(\ell _{i}), &\!\!\! K+1\!\leq\! i\!\leq\! P,j\!<\!0, \\ 
	\bar{p}_\S^{(\xi)}(\ell _{i})\bar{\psi}_{+}^{(\xi,j)}(\ell _{i}|Z_{+}), &\!\!\! K+1\!\leq\! i\!\leq\! P,j\!\geq\! 0,%
	\end{cases}
	\end{equation}
%\begin{equation} \label{eq:origCostMatrix}
%\resizebox{.89\hsize}{!}{
%	$
%	\eta _{i}(j)=
%	\begin{cases}
%	1-r_\bplus(\ell _{i}), &   1\leq i\leq K,j<0, \\ 
%	r_\bplus(\ell _{i})\bar{\psi}_{+}^{(\xi ,j)}(\ell_{i}|Z_{+}), & 1\leq i\leq K,j\geq 0, \\ 
%	1-\bar{p}_\S^{(\xi)}(\ell _{i}), & K+1\leq i\leq P,j<0, \\ 
%	\bar{p}_\S^{(\xi)}(\ell _{i})\bar{\psi}_{+}^{(\xi,j)}(\ell _{i}|Z_{+}), & K+1\leq i\leq P,j\geq 0,%
%	\end{cases}
%	$
%}
%\end{equation}
	where $\bar{\psi}_{+}^{(\xi ,j)}(\ell _{i}|Z_{+})=\left\langle \bar{p}_{+}^{(\xi )}(\cdot ,\ell _{i}),\psi _{+}^{(j)}(\cdot ,\ell_{i}|Z_{+})\right\rangle $. Then $w^{(I,\xi,I_{+},\theta_{+})}(Z_{+})=\prod_{i=1}^{P}\eta _{i}(\gamma _{i})$, if the positive 1-1 vectors $\gamma $ are equivalent representations of $(I_{+},\theta _{+})$. Hence generating significant GLMB children components of $(I,\xi )$ translates to generating positive 1-1 vectors with significant weights \cite{Vo_2016_fast_dGLMB}. Methods for obtaining a set of positive 1-1 vectors include:
	\begin{itemize}
	\item solving a ranked assignment problem using Murty's algorithm \cite{Murty_1968}, which finds the $N$ best 1-1 vectors in non-increasing order; and
	\item a more efficient method using the Gibbs sampler to simulate an unordered set of significant positive 1-1 vectors \cite{Vo_2016_fast_dGLMB}.
\end{itemize}

%%%%%%%%%%%%%%%%%%%%%%%%%%%%%%%%%%%%%%%%%%
%                                        %
%	GLMB WITH SPAWNING                   %
%                                        %
%%%%%%%%%%%%%%%%%%%%%%%%%%%%%%%%%%%%%%%%%%
\section{GLMB Filter with Spawning \label{sec:newStuff}}

This section presents a labeled \ac{rfs} spawning model and a tractable multi-object filter for such a spawning model. In Subsection~\ref{sec:dglmbPredSpawn}, we detail the labeled \ac{rfs} spawning model. In Subsection~\ref{sec:dglmbPred} we derive the resulting prediction multi-object density at the next time for a given GLMB at the current time. We discuss the method of GLMB approximation which matches first moment and cardinality, then utilize it in the derivation of the multi-object filtering density at the next time in Subsection~\ref{sec:glmbUpdApprox}. Implementation of the resulting GLMB recursion is detailed in Subsection~\ref{sec:Implement}.

	%%%%%%%%%%%%%%%%%%%%%%%%%%%%%%%%%%%%%%%%%%%%%%%%%%%%%%%%%%%%%%%%%%%%%%%%%%%%%%%%%%%%%%%%%%%%%%%%%%%%%%%%%%%%%%%%%%%%%%%%%%%%%%
	%                                        																					 %
	%%%%%%%%%%%%%%%%%%%%%%%%%%%%%%%%%%%%%%%%%%%%%%%%%%%%%%%%%%%%%%%%%%%%%%%%%%%%%%%%%%%%%%%%%%%%%%%%%%%%%%%%%%%%%%%%%%%%%%%%%%%%%%
	\subsection{Multi-object Labeled Spawning Model\label{sec:dglmbPredSpawn}}

	To encode ancestry information in the labels, we adhere to the following labeling convention for spawned tracks. An object spawned from a parent with label $\ell $, at time $k+1$, has label $\varsigma =(\ell ,k+1,i)$, where $i$ is an index that distinguishes multiple objects simultaneously spawned by the same parent. As a result, spawn labels consist of an ancestral element, i.e., the parent's label, and a non-ancestral element that distinguishes multiple spawned objects originating simultaneously from the same parent. Hence, given the label space $\mathbb{L}\xspace_{k}$ for objects at the current time, the label space $\mathbb{T}_{k+1}$\xspace for tracks spawned at the next time is given by $\mathbb{T}_{k+1}\xspace=\mathbb{L}\xspace_{k}\times \{k+1\}\times \mathbb{N}\xspace$.

	Hereafter, in an effort to simplify notation, we revert to the convention of letting the symbol $+$ denote the \textquotedblleft next time" index, e.g., the label space $\mathbb{L}\xspace_{k}$ at the current time becomes $\mathbb{L}\xspace$, and the label space $\mathbb{L}\xspace_{k+1}$ at the next time becomes $\mathbb{L}\xspace_{+}$. Accordingly, we follow the same construction in \cite{Vo_dGLMB_ConjPrior_2013} by letting $\mathbb{T}_{+}$\xspace denote the label space for objects spawned at the next time, then $\mathbb{L}\xspace_{+}=\mathbb{L}\xspace\cup \mathbb{T}_{+}\xspace\cup\mathbb{B}_{+}\xspace$. Note that $\mathbb{L}$\xspace, $\mathbb{T}_{+}$\xspace, and $\mathbb{B}_{+}$\xspace are mutually disjoint, i.e., $\mathbb{L}\xspace\cap \mathbb{T}_{+}\xspace=\mathbb{L}\xspace\cap \mathbb{B}_{+}\xspace=\mathbb{T}_{+}\xspace\cap \mathbb{B}_{+}\xspace=\emptyset $. Hence, we can distinguish surviving, spawn, and birth objects from their labels. Fig.~\ref{fig:spawnlabels}, modeled after \cite[Fig.1]{Vo_dGLMB_Sequel_2014} in the interest of consistency, illustrates label assignment to birth and spawn tracks.

	The elements of a given current multi-object state $\mathbf{X}$ spawn new objects independently of each other. In addition, the set $\mathbf{U}\xspace$ of objects spawned at the next time by a single-object state $\mathbf{x}\triangleq (x,\ell )$, is an \ac{lmb}\footnote{It is possible to derive a \ac{glmb} based spawning model, but an \ac{lmb} is presented for compactness} with parameter set $\{(p_\T(\mathbf{x;}\varsigma ),f_\tplus(\cdot |\mathbf{x;}\varsigma)):\varsigma \in\mathbb{T}_{+}(\mathcal{L}(\mathbf{x)})\}$, where 
	\begin{equation}
	\mathbb{T}_{+}(\ell )\xspace\triangleq \{(\ell ,k+1)\}\times \{1:M_{\ell }\} \label{eq:Tequation}	
	\end{equation}
	is a finite subset set of $\mathbb{T}_{+}\xspace$. In other words, for each LMB component $\varsigma \in \mathbb{T}_{+}(\mathcal{L}(\mathbf{x)})$, the state $\mathbf{x}$ either spawns a state $(x_{+},\varsigma )$ with probability $p_\T(\mathbf{x;}\varsigma )$ and probability density $f_\tplus(x_{+}|\mathbf{x;}\varsigma )$, or it does not with probability $q_\T(\mathbf{x;}\varsigma )=1-p_\T(\mathbf{x;}\varsigma )$. The density of the set of objects spawned from $\mathbf{x}$ can be written as 
	\begin{align} \label{eq:spawnKern}
	\fbold_\tplus(\mathbf{U}\xspace|\mathbf{x})=\Delta (\mathbf{U}\xspace)&1_{\mathbb{T}_{+\!}(\mathcal{L}(\mathbf{x)})}(\mathcal{L}(\mathbf{U}\xspace))\! \nonumber
	\\
	&\hspace{15pt}\times\left[\Phi _\tplus(\mathbf{U}\xspace|\mathbf{x;\cdot })\right]^{\mathbb{T}_{+}(\mathcal{L}(\mathbf{x)})}
	\end{align}
	where
	\begin{align}
	\Phi_\tplus\left( \mathbf{U}\xspace|\mathbf{x;}\varsigma\right)  = \hspace{-8pt}\sum_{(x_{\!+},\ell _{\!+\!})\in\mathbf{U}}&\hspace{-5pt}\delta _{\varsigma }(\ell _{\!+})p_\T(\mathbf{x;}\varsigma)f_\tplus(x_{+}|\mathbf{x;}\varsigma) \nonumber
	\\
	&+[1-1_{\mathcal{L}(\mathbf{U}\xspace)\!}(\varsigma )]q_\T(\mathbf{x;}\varsigma ).
	\label{eq:spawnKern2}
	\end{align}

	Since $\mathbf{X}$ has distinct labels, the LMB label sets $\mathbb{T}_{+}(\mathcal{L}(\mathbf{x)})$ for all $\mathbf{x}\in \mathbf{X}\xspace$ are mutually disjoint, and the set of possible labels spawned from $\mathbf{X}$ is the disjoint union 
	\begin{equation*}
	\mathbb{T}_{+}(\mathcal{L}(\mathbf{X)})=\biguplus_{\mathbf{x}\in \mathbf{X}%
	\xspace}\mathbb{T}_{+}(\mathcal{L}(\mathbf{x)}).
	\end{equation*}
	Note that when a labeled set $\mathbf{V}\xspace$ is spawned from $\mathbf{X}$, $\mathcal{L}(\mathbf{V})\subseteq \mathbb{T}_{+}(\mathcal{L}(\mathbf{X)})$, i.e., $1_{\mathbb{T}_{+}(\mathcal{L}(\mathbf{X)})}(\mathcal{L}(\mathbf{V}))=1$, otherwise $\mathcal{L}(\mathbf{V})\nsubseteq \mathbb{T}_{+}(\mathcal{L}(\mathbf{X)})$, i.e., $1_{\mathbb{T}_{+}(\mathcal{L}(\mathbf{X)})}(\mathcal{L}(\mathbf{V}))=0$. Hence, the inclusion $1_{\mathbb{T}_{+}(\mathcal{L}(\mathbf{X)})}(\mathcal{L}(\mathbf{V}))$ is an indicator of whether $\mathbf{V}$ is spawned by $\mathbf{X}$ or not. Moreover, if $\mathbf{V}\xspace$ is not spawned from $\mathbf{X}$\textbf{, }then the spawning density $\mathbf{f}_\tplus(\mathbf{V}\xspace|\mathbf{X}\xspace)=0$. On the other hand if $\mathbf{V}\xspace$ is spawned from $\mathbf{X}$\xspace, then
	\begin{equation*}
	\mathbf{V}=\biguplus_{\mathbf{x}\in \mathbf{X}\xspace}\mathbf{V}\xspace\cap (%
	\mathbb{X}\times \mathbb{T}_{+}(\mathcal{L}(\mathbf{x)}),
	\end{equation*}
	and since $\mathbf{V}\xspace\cap (\mathbb{X}\times \mathbb{T}_{+}(\mathcal{L}(\mathbf{x)})$ is the set of objects spawned by $\mathbf{x}$, it follow from the \ac{fisst} fundamental convolution theorem \cite{mahler2007statistical}, and arguments presented in \cite{Vo_dGLMB_ConjPrior_2013} that the spawning density 
	\begin{equation*}
	\mathbf{f}_\tplus(\mathbf{V}\xspace|\mathbf{X}\xspace)=\prod_{\mathbf{x}\in\mathbf{X}\xspace}\mathbf{f}_\tplus(\mathbf{V}\xspace\cap(\mathbb{X}\times \mathbb{T}_{+}(\mathcal{L}(\mathbf{x}))\xspace|\mathbf{x}).
	\end{equation*}
	Hence, $\mathbf{f}_\tplus(\mathbf{V}\xspace|\mathbf{X}\xspace)$ can be written as
	\begin{align}
	\mathbf{f}_\tplus(\mathbf{V}\xspace|\mathbf{X}\xspace)=&1_{\mathbb{T}_{+}(\mathcal{L}(\mathbf{X)})}(\mathcal{L}(\mathbf{V})) \nonumber
	\\
	&\hspace{5pt}\times\prod_{\mathbf{x}\in\mathbf{X}\xspace}\mathbf{f}_\tplus(\mathbf{V}\xspace\cap (\mathbb{X}\times \mathbb{T}_{+}(\mathcal{L}(\mathbf{x)})\xspace|\mathbf{x}).
	\end{align}
	Substituting \eqref{eq:spawnKern} into the above equation and noting that when $1_{\mathbb{T}_{+}(\mathcal{L}(\mathbf{X)})}(\mathcal{L}(\mathbf{V}))=1$, 
	\begin{eqnarray*}
	1_{\mathbb{T}_{+}(\mathcal{L}(\mathbf{x)})}(\mathcal{L}(\mathbf{V}\xspace\cap (\mathbb{X}\times \mathbb{T}_{+}(\mathcal{L}(\mathbf{x)}))) &=&1, 
	\\
	\prod_{\mathbf{x}\in \mathbf{X}\xspace}\Delta (\mathbf{V}\xspace\cap (\mathbb{X}\times \mathbb{T}_{+}(\mathcal{L}(\mathbf{x)})) &=&\Delta (\mathbf{V}\xspace),
	\end{eqnarray*}
	we have
	\begin{equation}
	\mathbf{f}_\tplus(\mathbf{V}\xspace|\mathbf{X}\xspace)=\Delta (\mathbf{V}\xspace)1_{\mathbb{T}_{+}(\mathcal{L}(\mathbf{X)})}(\mathcal{L}(\mathbf{V}))\left[ \Phi_\tplus\left( \mathbf{V}\xspace|\mathbf{\cdot }\right) \right] ^{\mathbf{X}\xspace}
	\end{equation}
	where
	\begin{equation}
	\hspace*{-5pt}\Phi _\tplus\!\left( \mathbf{V}\xspace|\mathbf{x}\right) \!=\!\left[\Phi _\tplus(\mathbf{V}\xspace\!\cap\! (\mathbb{X}\times \mathbb{T}_{+}(\mathcal{L}(\mathbf{x}))|\mathbf{x;\cdot })\right] ^{\mathbb{T}_{+}(\mathcal{L}(\mathbf{x)})}.
	\end{equation}
	
	\begin{figure}[t]
	\begin{center}
	\begin{tikzpicture}[scale=0.65,
	rect/.style= { to path={
			let \n1={sqrt(3)*#1},
			\p1=($(\tikztostart)!\n1!-135:(\tikztotarget)$), \p2=($(\tikztostart)!\n1!135:(\tikztotarget)$),
			\p3=($(\tikztotarget)!\n1!-135:(\tikztostart)$), \p4=($(\tikztotarget)!\n1!135:(\tikztostart)$)
			in
			(\p1) -- (\p2) -- (\p3) -- (\p4) --cycle}
	},
	staterect/.style= { to path={
			let \n1={sqrt(1.5)*#1},
			\p1=($(\tikztostart)!\n1!-135:(\tikztotarget)$), \p2=($(\tikztostart)!\n1!135:(\tikztotarget)$),
			\p3=($(\tikztotarget)!\n1!-135:(\tikztostart)$), \p4=($(\tikztotarget)!\n1!135:(\tikztostart)$)
			in
			(\p1) -- (\p2) -- (\p3) -- (\p4) --cycle}
	},
	cross/.style={cross out, draw=black, fill=none, minimum size=4, inner sep=0pt, outer sep=0pt},
	cross/.default={2pt},
	]
	
	\colorlet{LightGray}{gray!40}
	\colorlet{LightRed}{red!30}
	\def\lthick{0.6pt}
	
	% Track 1 coordinates
	\coordinate (p11) at (1,2.5);
	\coordinate (p12) at (2,3.00);
	\coordinate (p13) at (3,3.50);
	\coordinate (p14) at (4,3.75);
	\coordinate (p15) at (5,4.00);
	\coordinate (p16) at (6,4.25);
	\coordinate (p17) at (7,4.50);
	\coordinate (p18) at (7.2,4.55); % dummy coord
	
	% Track 2 coordinates
	\coordinate (p21) at (1,1.7);
	\coordinate (p22) at (2,1.5);
	\coordinate (p23) at (3,1.3);
	\coordinate (p24) at (4,0.8);
	\coordinate (p25) at (5,0.7);
	\coordinate (p26) at (5.2,0.68); % dummy coord
	
	% Track 3 coordinates
	\coordinate (p31) at (5,3.0);
	\coordinate (p32) at (6,2.8);
	\coordinate (p33) at (7,2.6);
	\coordinate (p34) at (8,2.2);
	\coordinate (p35) at (9,1.8);
	\coordinate (p36) at (9.2,1.72);	% dummy coord	
	
	% Draw track 3
	\draw[color=Apricot, rounded corners=3.0mm, fill=Apricot] (p15) to[rect=.3cm] (p31);
	\draw[color=Apricot, rounded corners=3.0mm, fill=Apricot] (p31) to[rect=.3cm] (p32);
	\draw[color=Apricot, rounded corners=3.0mm, fill=Apricot] (p32) to[rect=.3cm] (p33);
	\draw[color=Apricot, rounded corners=3.0mm, fill=Apricot] (p33) to[rect=.3cm] (p34);
	\draw[color=Apricot, rounded corners=3.0mm, fill=Apricot] (p34) to[rect=.3cm] (p35);
	\draw[color=Apricot, rounded corners=3.0mm, fill=Apricot] (p35) to[rect=.3cm] (p36);
	
	% Draw track 1
	\draw[color=LightGray, rounded corners=3.0mm, fill=LightGray] (p11) to[rect=.3cm] (p13);
	\draw[color=LightGray, rounded corners=3.0mm, fill=LightGray] (p13) to[rect=.3cm] (p17);
	\draw[color=LightGray, rounded corners=3.0mm, fill=LightGray] (p17) to[rect=.3cm] (p18);
	
	% Draw track 2
	\draw[color=LightRed, rounded corners=3.0mm, fill=LightRed] (p21) to[rect=.3cm] (p23);
	\draw[color=LightRed, rounded corners=3.0mm, fill=LightRed] (p23) to[rect=.3cm] (p24);
	\draw[color=LightRed, rounded corners=3.0mm, fill=LightRed] (p24) to[rect=.3cm] (p25);
	\draw[color=LightRed, rounded corners=3.0mm, fill=LightRed] (p25) to[rect=.3cm] (p26);
	
	% draw rectangle and partitions
	\draw[color=gray,line width=1pt] (0,0) rectangle (10,5);
	\foreach \x in {1,...,10}
	\draw[color=gray,line width=\lthick] (\x, 5) -- (\x, 0);
	
	% add time indices below rectangle
	\foreach \x in {0,...,5}
	\node at (\x, -0.4) {\x};
	\node at (6, -0.4) {\ldots};
	
	% Draw multi-target state 1
	\draw[densely dashed, thick, fill=CornflowerBlue,fill opacity=0.3] (p11) to[staterect=.3cm] (p21);
	
	% Draw multi-target state 2
	\draw[densely dashed, thick, fill=CornflowerBlue,fill opacity=0.3] (p12) to[staterect=.3cm] (p22);
	
	% Draw multi-target state 3
	\draw[densely dashed, thick, fill=CornflowerBlue,fill opacity=0.3] (p13) to[staterect=.3cm] (p23);
	
	% Draw multi-target state 4
	\draw[densely dashed, thick, fill=CornflowerBlue,fill opacity=0.3] (p14) to[staterect=.3cm] (p24);
	
	% Draw multi-target state 5
	\draw[densely dashed, thick, fill=CornflowerBlue,fill opacity=0.3] (p15) to[staterect=.3cm] (p25);
	
	% Draw multi-target state 6
	\draw[densely dashed, thick, fill=CornflowerBlue,fill opacity=0.3] (p16) to[staterect=.3cm] (p32);
	
	% Draw multi-target state 7
	\draw[densely dashed, thick, fill=CornflowerBlue,fill opacity=0.3] (p17) to[staterect=.3cm] (p33);
	
	% Draw multi-target state 8
	\draw[densely dashed, thick, fill=CornflowerBlue,fill opacity=0.3] (p34)+(0,0.01) to[staterect=.3cm] (p34);
	
	% Draw multi-target state 9
	\draw[densely dashed, thick, fill=CornflowerBlue,fill opacity=0.3] (p35)+(0,0.01) to[staterect=.3cm] (p35);
	
	% Draw track 1 points
	\foreach \point [count=\i] in {(p11),(p12),(p13),(p14),(p15),(p16),(p17)} {
		\draw[color=black,line width=1.5*\lthick,fill=white] \point circle (0.17);
		\draw \point node[cross,line width=1.5*\lthick] {};
	}
	
	% Draw track 2 points
	\foreach \point [count=\i] in {(p21),(p22),(p23),(p24),(p25)} {
		\draw[color=black,line width=1.5*\lthick,fill=white] \point circle (0.17);
		\draw \point node[cross,line width=1.5*\lthick] {};
	}
	
	% Draw track 3 points
	\foreach \point [count=\i] in {(p31),(p32),(p33),(p34),(p35)} {
		\draw[color=black,line width=1.5*\lthick,fill=white] \point circle (0.17);
		\draw \point node[cross,line width=1.5*\lthick] {};
	}
	
	% Draw Multi-target state lines
	\draw[densely dashed, line width=1] (p11)+(0,0.3) -- (1,5.5);
	\draw[densely dashed, line width=1] (p12)+(0,0.3) -- (2,5.5);
	\draw[densely dashed, line width=1] (p13)+(0,0.3) -- (3,5.5);
	\draw[densely dashed, line width=1] (p14)+(0,0.3) -- (4,5.5);
	\draw[densely dashed, line width=1] (p15)+(0,0.3) -- (5,5.5);
	\draw[densely dashed, line width=1] (p16)+(0,0.3) -- (6,5.5);
	\draw[densely dashed, line width=1] (p17)+(0,0.3) -- (7,5.5);
	\draw[densely dashed, line width=1] (p34)+(0,0.3) -- (8,5.5);
	\draw[densely dashed, line width=1] (p35)+(0,0.3) -- (9,5.5);
	\draw[densely dashed, line width=1] (1,5.5) -- (9,5.5);
	
	% Labels
	\node at (5,6) {multi-target states};
	\node [rotate=90] at (-0.333,2.5) { state space};
	\node at (5,-0.85) { time};
	
	% Box around tracks labels
	\draw[densely dashed, thick, fill=CornflowerBlue,fill opacity=0.3] (10.2,0.3) rectangle (12.7,4.9);
	
	\node (A) at (11.4,4.5) {\small (1,2)};
	\draw[->,line width=1] (A.west) -- (7.54,4.5);
	\node (B) at (11.4,1.65) {\small (1,2,5,1)};
	\draw[->,line width=1] (B.west) -- (9.54,1.65);
	\node (C) at (11.4,0.65) {\small (1,1)};
	\draw[->,line width=1] (C.west) -- (5.54,0.65);
	\node at (11.4,5.2) { tracks};
	\end{tikzpicture}
	\end{center}
	\caption{An example of label assignment for birth and spawn tracks. Two
	tracks are born at time $1$ and are assigned labels $(1,1)$ and $(1,2)$. At
	time $5$, a track is spawned from track $(1,2)$ and is assigned label $%
	(1,2,5,1)$.}
	\label{fig:spawnlabels}
	\end{figure}

	The multi-object state at the next time $\mathbf{X}_{+}\xspace=\mathbf{X}_\splus\cup \mathbf{X}_\tplus\cup \mathbf{X}_\bplus$ is the superposition of surviving objects $\mathbf{X}_\splus=\mathbf{X}\xspace_{+}\cap (\mathbb{X}\xspace\times \mathbb{L}\xspace)$, birth objects $\mathbf{X}_\bplus=\mathbf{X}\xspace_{+}\cap (\mathbb{X}\xspace\times \mathbb{B}_{+}\xspace)$ and spawned objects $\mathbf{X}_\tplus=\mathbf{X}\xspace_{+}\cap (\mathbb{X}\xspace\times \mathbb{T}_{+}\xspace)$. Since the label spaces $\mathbb{L}$\xspace, $\mathbb{T}_{+}$\xspace, and $\mathbb{B}_{+}$\xspace are mutually disjoint, it follows that $\mathbf{X}_\splus$, $\mathbf{X}_\tplus$, and $\mathbf{X}_\bplus$ are also mutually disjoint. Further, using the conditional independence of $\mathbf{X}_\splus$, $\mathbf{X}_\tplus$, and $\mathbf{X}_\bplus$, it follows from the \ac{fisst} fundamental convolution theorem \cite{mahler2007statistical}, \cite{Vo_dGLMB_ConjPrior_2013} that the multi-object transition kernel is given by 
	\begin{equation}
	\mathbf{f}_{+}\xspace(\mathbf{X}\xspace_{+}|\mathbf{X}\xspace)\!=\!\mathbf{f}_{\S,\!+}(\mathbf{X}_\splus|\mathbf{X}\xspace)\mathbf{f}_\tplus(\mathbf{X}_\tplus|\mathbf{X}\xspace)\mathbf{f}_\bplus(\mathbf{X}_\bplus).  \label{eq:stbKern}
	\end{equation}

	%%%%%%%%%%%%%%%%%%%%%%%%%%%%%%%%%%%%%%%%%%%%%%%%%%%%%%%%%%%%%%%%%%%%%%%%%%%%%%%%%%%%%%%%%%%%%%%%%%%%%%%%%%%%%%%%%%%%%%%%%%%%%%
	%                                        																					 %
	%%%%%%%%%%%%%%%%%%%%%%%%%%%%%%%%%%%%%%%%%%%%%%%%%%%%%%%%%%%%%%%%%%%%%%%%%%%%%%%%%%%%%%%%%%%%%%%%%%%%%%%%%%%%%%%%%%%%%%%%%%%%%%
	\subsection{Multi-object prediction with spawning\label{sec:dglmbPred}}

	In general, for a multi-object transition density with spawning \eqref{eq:stbKern}, the GLMB family is not necessarily closed under the Chapman-Kolmogorov equation
	\begin{equation} \label{eq:ChapKolm}
	\boldsymbol{\pi }(\mathbf{X}_{+})=\int \mathbf{f}_{+}(\mathbf{X}_{+}|\mathbf{%
	X})\boldsymbol{\pi }(\mathbf{X)}\delta \mathbf{X}.
	\end{equation}

	\begin{proposition}
	If the current multi-object filtering density is \ac{glmb} of the form \eqref{eq:dglmbDist}, then the multi-object prediction density formed by surviving, birth and spawning processes is given by 
	\begin{equation}
	\boldsymbol{\pi }\xspace(\mathbf{X}\xspace_{+})=\Delta (\mathbf{X}\xspace_{+})\sum_{I,\xi }w_{+}^{(I,\xi )}(\mathcal{L}(\mathbf{X}\xspace_{+}))p^{(I,\xi)}(\mathbf{X}\xspace_{+})  \label{eq:prop1Dist}
	\end{equation}
	where
	\begin{align}
	& w_{+}^{(I,\xi )}(\mathcal{L}(\mathbf{X}\xspace_{+}))  \notag \\
	& \hspace{5pt}=w^{(I,\xi )}1_{I}(\mathcal{L}(\mathbf{X}_\splus))1_{\mathbb{T}_{+}(\mathcal{L}(\mathbf{X)})}(\mathcal{L}(\mathbf{X}_\tplus))1_{\mathbb{B}_{+}\xspace}(\mathcal{L}(\mathbf{X}_\bplus))  \notag \\
	& \hspace{10pt}\times \lbrack 1-r_\bplus]^{\mathbb{B}_{+}\xspace-\mathcal{L}(\mathbf{X}_\bplus)}[r_\bplus]^{\mathcal{L}(\mathbf{X}_\bplus)}, \\
	& p^{(I,\xi )}(\mathbf{X}\xspace_{+})  \notag \\
	& \hspace{5pt}=\left[ p_\bplus\right] ^{\mathbf{X}_\bplus}\nonumber
	\\
	&\hspace{10pt}\times\prod_{\ell \in I}\left\langle \Phi _\splus(\mathbf{X}_\splus|\cdot ,\ell )\Phi_\tplus(\mathbf{X}_\tplus|\cdot ,\ell ),p^{(\xi)}(\cdot ,\ell)\right\rangle ,  \label{eq:pIEofX+}
	\end{align}
	$\mathbf{X}_\splus=\mathbf{X}\xspace_{+}\cap (\mathbb{X}\xspace\times \mathbb{L}\xspace)$, $\mathbf{X}_\tplus=\mathbf{X}\xspace_{+}\cap(\mathbb{X}\xspace\times \mathbb{T}_{+}\xspace)$, and $\mathbf{X}_\bplus=\mathbf{X}\xspace_{+}\cap (\mathbb{X}\xspace\times \mathbb{B}_{+}\xspace)$.
	\end{proposition}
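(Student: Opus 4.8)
The plan is to evaluate the Chapman--Kolmogorov set integral \eqref{eq:ChapKolm} directly, substituting the GLMB density \eqref{eq:dglmbDist} together with the factored transition kernel \eqref{eq:stbKern}, and then exploiting the fact that both the survival kernel \eqref{eq:survKern} and the combined spawning density $\fbold_\tplus(\Xbold_\tplus|\Xbold)=\Delta(\Xbold_\tplus)1_{\mathbb{T}_{+}(\labels{\Xbold})}(\labels{\Xbold_\tplus})[\Phi_\tplus(\Xbold_\tplus|\cdot)]^\Xbold$ are multi-object exponentials over the \emph{current} state set $\Xbold$. First I would fix the decomposition $\Xbold_\splus=\Xbold_+\cap(\Xspace\times\Lspace)$, $\Xbold_\tplus=\Xbold_+\cap(\Xspace\times\Tspace)$, $\Xbold_\bplus=\Xbold_+\cap(\Xspace\times\Bspace)$, which is well defined and disjoint since $\Lspace,\Tspace,\Bspace$ are mutually disjoint; the same disjointness yields $\Delta(\Xbold_+)=\Delta(\Xbold_\splus)\Delta(\Xbold_\tplus)\Delta(\Xbold_\bplus)$. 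Because the birth density \eqref{eq:birthKern} does not depend on $\Xbold$, it factors out of the set integral immediately, contributing the $[p_\bplus]^{\Xbold_\bplus}$ term and the birth weight $w_\bplus(\labels{\Xbold_\bplus})$.

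The heart of the argument is the remaining integral over $\Xbold$ of the product $\fbold_\splus(\Xbold_\splus|\Xbold)\fbold_\tplus(\Xbold_\tplus|\Xbold)\pibold(\Xbold)$. I would merge the three multi-object exponentials $[\Phi_\splus(\Xbold_\splus|\cdot)]^\Xbold$, $[\Phi_\tplus(\Xbold_\tplus|\cdot)]^\Xbold$ and $[p^{(\xi)}]^\Xbold$ into the single exponential $[\Phi_\splus(\Xbold_\splus|\cdot)\Phi_\tplus(\Xbold_\tplus|\cdot)p^{(\xi)}]^\Xbold$, since all three are products over the same current objects $(x,\ell)\in\Xbold$ (and $\Delta(\Xbold)$ is idempotent). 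The two label-set indicators $1_{\labels{\Xbold}}(\labels{\Xbold_\splus})$ and $1_{\mathbb{T}_{+}(\labels{\Xbold})}(\labels{\Xbold_\tplus})$ depend on $\Xbold$ only through its label set, so once the GLMB factor $\delta_I(\labels{\Xbold})$ pins $\labels{\Xbold}=I$ they reduce to the constants $1_I(\labels{\Xbold_\splus})$ and $1_{\mathbb{T}_{+}(I)}(\labels{\Xbold_\tplus})$ and may be pulled outside the integral.

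What remains is $\int\Delta(\Xbold)\delta_I(\labels{\Xbold})[g^{(\xi)}(\cdot)]^\Xbold\,\delta\Xbold$ with $g^{(\xi)}(x,\ell)=\Phi_\splus(\Xbold_\splus|x,\ell)\Phi_\tplus(\Xbold_\tplus|x,\ell)p^{(\xi)}(x,\ell)$. I would evaluate it with the standard labeled set-integral identity \cite{Vo_dGLMB_ConjPrior_2013}: because $\Delta(\Xbold)\delta_I(\labels{\Xbold})$ is supported on sets $\Xbold=\{(x_\ell,\ell):\ell\in I\}$ indexed by the labels of $I$, the set integral collapses to $\prod_{\ell\in I}\int g^{(\xi)}(x,\ell)\,dx=\prod_{\ell\in I}\langle\Phi_\splus(\Xbold_\splus|\cdot,\ell)\Phi_\tplus(\Xbold_\tplus|\cdot,\ell),p^{(\xi)}(\cdot,\ell)\rangle$, which is exactly the product appearing in \eqref{eq:pIEofX+}. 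Reassembling the three $\Delta$ factors into $\Delta(\Xbold_+)$, collecting $w^{(I,\xi)}$ with the three label indicators and the birth weight into $w_+^{(I,\xi)}(\labels{\Xbold_+})$, and summing over $(I,\xi)$ then produces \eqref{eq:prop1Dist}.

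I expect the main obstacle to be bookkeeping rather than conceptual: correctly merging the two conditional multi-object exponentials (survival and spawning) into one before integrating, and tracking which indicator and $\Delta$ factors are constant on the support so that they can be extracted from the set integral. It is worth flagging that the resulting parent integral $\langle\Phi_\splus(\Xbold_\splus|\cdot,\ell)\Phi_\tplus(\Xbold_\tplus|\cdot,\ell),p^{(\xi)}(\cdot,\ell)\rangle$ does \emph{not} factor into separate single-object densities for the surviving parent and its offspring, because the integration over the parent state $x$ couples them; this coupling is precisely why $p^{(I,\xi)}(\Xbold_+)$ fails to be a multi-object exponential, and hence why the predicted density is not a GLMB, consistent with the remark preceding the proposition.
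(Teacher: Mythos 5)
Your proposal is correct and follows essentially the same route as the paper's proof: substitute the factored kernel \eqref{eq:stbKern} into the Chapman--Kolmogorov equation, pull the birth density out of the set integral, merge the survival, spawning, and prior single-object factors into one multi-object exponential over $\mathbf{X}$, and collapse the set integral via the labeled set-integral identity (\cite[Lemma 3]{Vo_dGLMB_ConjPrior_2013}) with $\delta_I(\mathcal{L}(\mathbf{X}))$ pinning the label set to $I$. Your closing observation that the parent-state integral couples the surviving object with its offspring, so that $p^{(I,\xi)}(\mathbf{X}_+)$ is not a multi-object exponential, correctly identifies why the predicted density fails to be a GLMB, matching the remark following the proposition.
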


	\begin{proof}
	Using the Chapman-Kolmogorov\xspace equation \eqref{eq:ChapKolm} and \eqref{eq:stbKern} with $\mathbf{f}_\splus(\mathbf{X}_\splus\xspace|\mathbf{X}\xspace)$ from \eqref{eq:survKern} and $\mathbf{f}_\tplus(\mathbf{X}_\tplus|\mathbf{X}\xspace)$ from \eqref{eq:spawnKern}, we have 
	\begin{align}
	& \boldsymbol{\pi }\xspace(\mathbf{X}\xspace_{+})  \notag \\
	& =\mathbf{f}_\bplus(\mathbf{X}_\bplus)\!\int \!\Delta (\mathbf{X}_\splus)1_{\mathcal{L}(\mathbf{X}\xspace)}(\mathcal{L}(\mathbf{X}_\splus))\left[\Phi_\splus(\mathbf{X}_\splus|\cdot )\right]^{\mathbf{X}}\xspace  \notag \\
	& \hspace{0.5cm}\times \Delta (\mathbf{X}_\tplus)1_{\mathbb{T}_{+}(\mathcal{L}(\mathbf{X)})}(\mathcal{L}(\mathbf{X}_\tplus\xspace))\left[ \Phi_\tplus(\mathbf{X}_\tplus|\cdot )\right] ^{\mathbf{X}}\xspace  \notag \\
	& \hspace{0.5cm}\times \Delta (\mathbf{X}\xspace)\sum_{I,\xi }w^{(I,\xi)}\delta _{I}(\mathcal{L}(\mathbf{X}\xspace))\left[ p^{(\xi )}\right] ^{\mathbf{X}}\xspace\delta \mathbf{X}\xspace \\
	& \!=\Delta (\mathbf{X}_\splus)\Delta (\mathbf{X}_\tplus\xspace)\mathbf{f}_\bplus(\mathbf{X}_\bplus)\sum_{I,\xi }\!\!\int \!\!\Delta(\mathbf{X}\xspace)w^{(I,\xi )}\delta _{I}(\mathcal{L}(\mathbf{X}\xspace))  \notag \\
	& \hspace{0.5cm}\times 1_{\mathcal{L}(\mathbf{X}\xspace)}(\mathcal{L}(\mathbf{X}_\splus\xspace))1_{\mathbb{T}_{+}(\mathcal{L}(\mathbf{X)})}(\mathcal{L}(\mathbf{X}_\tplus))  \notag \\
	& \hspace{0.5cm}\times \left[ \Phi _\splus(\mathbf{X}_\splus\xspace|\cdot )\Phi _\tplus(\mathbf{X}_\tplus\xspace|\cdot )p^{(\xi )}\right] ^{\mathbf{X}}\xspace\delta \mathbf{X}\xspace \\
	& =\Delta (\mathbf{X}_\splus)\Delta (\mathbf{X}_\tplus\xspace)\mathbf{f}_\bplus(\mathbf{X}_\bplus)\sum_{I,\xi }\sum_{J\in\mathcal{F}(\mathbb{L}\xspace)}w^{(I,\xi )}\delta _{I}(J)  \notag \\
	& \hspace{0.5cm}\times 1_{J}(\mathcal{L}(\mathbf{X}_\splus))1_{\mathbb{T}_{+}(\mathcal{L}(\mathbf{X)})}(\mathcal{L}(\mathbf{X}_\tplus\xspace)) \notag \\
	& \hspace{0.5cm}\times \prod_{\ell \in I}\left\langle \Phi _\splus\xspace(\mathbf{X}_\splus|\cdot ,\ell )\Phi _\tplus\xspace(\mathbf{X}_\tplus|\cdot ,\ell),p^{(\xi )}(\cdot ,\ell )\right\rangle  \label{eq:prop1Proof}
	\end{align}
	where the last line follows from \cite[Lemma 3]{Vo_dGLMB_ConjPrior_2013}.
	Using $\mathbf{f}_\bplus(\mathbf{X}_\bplus)$ from \eqref{eq:birthKern}, $\Delta (\mathbf{X}\xspace_{+})=\Delta (\mathbf{X}_\splus)\Delta(\mathbf{X}_\tplus\xspace)\Delta (\mathbf{X}_\bplus)$, and noting that the only non-zero inner summand occurs when $I=J$, we have \eqref{eq:prop1Dist}.
	\end{proof}
	
	Implicit in \eqref{eq:prop1Dist} is that, even though the survival and spawn \acp{rfs} are mutually disjoint due to their labels (see \eqref{eq:stbKern}), they are both conditioned on the same multi-object state $\mathbf{X}$. Further, the objects spawned by a state $\mathbf{x\in X}$\xspace and its state at the next time (if survived) are all distinct, but conditioned on $\mathbf{x}$.

	%%%%%%%%%%%%%%%%%%%%%%%%%%%%%%%%%%%%%%%%%%%%%%%%%%%%%%%%%%%%%%%%%%%%%%%%%%%%%%%%%%%%%%%%%%%%%%%%%%%%%%%%%%%%%%%%%%%%%%%%%%%%%%
	%                                        																					 %
	%%%%%%%%%%%%%%%%%%%%%%%%%%%%%%%%%%%%%%%%%%%%%%%%%%%%%%%%%%%%%%%%%%%%%%%%%%%%%%%%%%%%%%%%%%%%%%%%%%%%%%%%%%%%%%%%%%%%%%%%%%%%%%
	\subsection{Multi-Object Update with Spawning\label{sec:glmbUpdApprox}}
	Since the predicted multi-object density is not a GLMB, the updated multi-object density 
	\begin{equation}
	\boldsymbol{\pi }_+\xspace(\mathbf{X}\xspace_{+}|Z_{+})=\frac{\boldsymbol{\pi }\xspace(\mathbf{X}\xspace_{+})g(Z_{+}|\mathbf{X}\xspace_{+})}{\int \boldsymbol{\pi }\xspace(\mathbf{X})g(Z_{+}|\mathbf{X})\delta \mathbf{X}}.
	\label{eq:approxThis}
	\end{equation}
	with the standard multi-object likelihood, is also not a GLMB. One strategy of using the GLMB filter to track with spawnings is to approximate the multi-object prediction density \eqref{eq:prop1Dist} by a GLMB prior to performing a measurement update. In this work we propose a more prudent approach whereby the GLMB approximation is performed on the updated multi-object density to reduce information loss, albeit at the cost of increased complexity. Such approximation can be achieved by finding a GLMB that matches the multi-object filtering density in the first moment and cardinality, and we do so as follows.
	
	An arbitrary labeled multi-object density on $\mathcal{F}(\mathbb{X}\xspace \times \mathbb{L}\xspace)$ can be writen in the form \cite[Proposition 3]{Papi2015}, 
	\begin{equation}
	\boldsymbol{\pi }\xspace(\mathbf{X}\xspace)=\Delta (\mathbf{X}\xspace)\sum_{c\in \mathbb{C}}w^{(c)}(\mathcal{L}(\mathbf{X}\xspace))p^{(c)}(\mathbf{X}\xspace)  \label{eq:arbLMB}
	\end{equation}
	where $\mathbb{C}$ is a discrete index set, the weights $w^{(c)}(\cdot )$ satisfy \eqref{eq:wsumto1}, and with $n=|\mathbf{X}\xspace|$, 
	\begin{equation*}
	\!\!\int \!p^{(c)}\left( \{(x_{1},\ell _{1}),\ldots ,(x_{n},\ell_{n})\}\right) d(x_{1},\ldots ,x_{n})=1.
	\end{equation*}
	Moreover, it was shown in \cite[Proposition 3]{Papi2015}, that such a labeled multi-object density can be approximated by the GLMB 
	\begin{equation}
	\hat{\boldsymbol{\pi }\xspace}(\mathbf{X}\xspace)=\Delta (\mathbf{X}\xspace)\sum_{(c,I)\in \mathbb{C}\times \mathcal{F}(\mathbb{L}\xspace)}\delta _{I}(\mathcal{L}(\mathbf{X}\xspace))\hat{w}^{(c,I)}\left[ \hat{p}^{(c,I)}\right]^{\mathbf{X}}\xspace  \label{eq:prop3Approx}
	\end{equation}
	where 
	\begin{align}
	\hat{w}^{(c,I)}& =w^{(c)}(I), \\
	\hat{p}^{(c,I)}(x,\ell )& =1_{I}(\ell )p_{I-\{\ell \}}^{(c)}(x,\ell ), \\
	p_{\{\ell _{1},\ldots ,\ell _{n}\}}^{(c)}(x,\ell )& =  \notag \\
	& \hspace{-2cm}\int \!p^{(c)}(\{(x,\ell ),(x_{1},\ell _{1}),\ldots,(x_{n},\ell _{n})\})d(x_{1},\ldots ,x_{n}).
	\end{align}
	A salient feature of this approximation method is that both the cardinality distribution and \ac{phd} of $\boldsymbol{\pi }\xspace$ are preserved. Additionally, note that $\mathbb{C}$ can take the form of any discrete index set, including the set of indices for the Cartesian product of a collection of finites subsets of some label space and an association history space, i.e., letting $\mathbb{C}=\mathcal{F}(\mathbb{L}\xspace)\times \Xi $ is possible.
	
	The exact form of the multi-object filtering density at the next time, and its GLMB approximation as per the result above, is given below in Proposition 2.
	
	\begin{figure*}[!t]
		%%%%%%%%%%%%%%%%%%%%%%%%%%%%%%%%%%%%%%%% In Prop 2
		% ensure that we have normalsize text
		\normalsize
		% Store the current equation number. 
		\setcounter{mytempeqncnt}{\value{equation}}
		% Set the equation number to one less than the one
		% desired for the first equation here.
		% The value here will have to changed if equations
		% are added or removed prior to the place these
		% equations are referenced in the main text. 
		\setcounter{equation}{41}
		
		\begin{align}
		\hat{\boldsymbol{\pi }\xspace}_+(\mathbf{X}\xspace_{+}|Z_{+}) =\Delta (\mathbf{X}\xspace_{+})&\sum_{I,\xi ,I_{+},\theta _{+}}\delta _{I_{+}}(\mathcal{L}(\mathbf{X}\xspace_{+}))\hat{w}_{+}^{(I,\xi ,I_{+},\theta_{+})}(Z_{+}) \left[ p_\bplus\psi _{+}^{(\theta _{+})}(\cdot |Z_{+})\right] ^{\mathbf{X}_\bplus}\frac{\left[ \hat{p}_{+}^{(I,\xi ,I_{+},\theta_{+})}(\cdot|Z_{+})\right] ^{\mathbf{X}_\splus\cup \mathbf{X}_\tplus}}{\left[ \bar{p}_{+}^{(I,\xi,I_{+},\theta _{+})}(\cdot |Z_{+})\right]^{I_{+}}} \label{eq:prop2Density}
		\\
		\hat{w}_+^{(I,\xi,I_+,\theta_+)}(Z_+) &= \frac{ w_+^{(I,\xi)}(I_+)\left[\bar{p}_+^{(I,\xi,I_+,\theta_+)}(\cdot|Z_+)\right]^{I_+}}{\sum\limits_{I,\xi,I_+,\theta_+}  w_+^{(I,\xi)}(I_+)\left[\bar{p}_+^{(I,\xi,I_+,\theta_+)}(\cdot|Z_+)\right]^{I_+}} \label{eq:inProp2_1}
		\\
		\hat{p}_+^{(I,\xi,I_+,\theta_+)}(x_+,\ell_+|Z_+) &= 1_{I_+}(\ell_+)\int p_+^{(I,\xi,\theta_+)}\left(\left\{ (x_+,\ell_+),(x_{1,+},\ell_{1,+}),\ldots,(x_{n,+},\ell_{n,+})\right\}|Z_+\right)d(x_{1,+},\ldots,x_{n,+}) \label{eq:inProp2_2}
		\\
		\bar{p}_+^{(I,\xi,I_+,\theta_+)}(\ell_+|Z_+) &= 1_{\Bspace}(\ell_+)\left\langle p_\bplus(\cdot,\ell_+),\psi_+^{(\theta_+)}(\cdot|Z_+)  \right\rangle + (1-1_{\Bspace}(\ell_+))\left\langle \hat{p}_+^{(I,\xi,I_+,\theta_+)}(\cdot,\ell_+|Z_+),1 \right\rangle \label{eq:inProp2_3}
		\end{align}
		% Restore the current equation number. 
		\setcounter{equation}{\value{mytempeqncnt}}
		% IEEE uses as a separator
		\hrulefill
		% The spacer can be tweaked to stop underfull vboxes. 
		\vspace*{4pt}
	\end{figure*}

	\begin{proposition} %###########  2
		\label{prop2} If the current filtering density is GLMB of form~\eqref{eq:dglmbDist} and given the multi-object likelihood \eqref{eq:multLikeGLMB}, then the multi-object filtering density at the next time is given by 
		\begin{align}
		&\boldsymbol{\pi }_+\xspace(\mathbf{X}\xspace_{+}|Z_{+})\nonumber
		\\
		&\hspace{5pt}\propto \Delta (\mathbf{X}\xspace_{+})\sum_{I,\xi ,\theta _{+}}\!\!\!\!w_{+}^{(I,\xi)}(\mathcal{L}(\mathbf{X}\xspace_{+}))\left[ p_\bplus\psi _{+}^{(\theta _{+})}(\cdot |Z_{+})\right]^{\mathbf{X}_\bplus}  \notag \\
		& \hspace{10pt}\times p_{+}^{(I,\xi ,\theta _{+})}(\mathbf{X}_\splus\xspace\cup \mathbf{X}_\tplus|Z_{+})
		\end{align}
		where $I\in \mathcal{F}(\mathbb{L}\xspace)$, $\xi \in \Xi $, $\theta_{+}\in \Theta _{+}(\mathcal{L}(\mathbf{X}\xspace_{+}))$, $\mathbf{X}_\bplus=\mathbf{X}\xspace_{+}\cap (\mathbb{X}\xspace\times \mathbb{B}_{+}\xspace)$, $\mathbf{X}_\splus=\mathbf{X}\xspace_{+}\cap (\mathbb{X}\xspace\times \mathbb{L}\xspace)$, $\mathbf{X}_\tplus\xspace=\mathbf{X}\xspace_{+}\cap(\mathbb{X}\xspace\times \mathbb{T}_{+}\xspace)$, and 
		\begin{align}
		&p_{+}^{(I,\xi ,\theta _{+})}(\mathbf{X}_\splus\cup \mathbf{X}_\tplus\mathbf{|}Z_{+})\nonumber
		\\
		&\hspace{5pt} =\left[ \psi_{+}^{(\theta _{+})}(\cdot |Z_{+})\right]^{\mathbf{X}_\splus\xspace\cup \mathbf{X}_\tplus}  \nonumber
		\\
		&\hspace{10pt}\times\prod_{\ell \in I}\left\langle \Phi _\splus\xspace(\mathbf{X}_\splus|\cdot ,\ell )\Phi_\tplus\xspace(\mathbf{X}_\tplus|\cdot ,\ell
		),p^{(\xi )}(\cdot ,\ell )\right\rangle . \label{eq:nonBirthDist}
		\end{align}
		
		\setcounter{equation}{45}
		
		Moreover, it can be approximated by the GLMB given by \eqref{eq:prop2Density}-\eqref{eq:inProp2_3} [see top of page] which preserves the first moment and cardinality distribution, where $I_{+}\in \mathbb{L}\xspace_{+}$.
%		\begin{align}
%		\hat{\boldsymbol{\pi }\xspace}_+(\mathbf{X}\xspace_{+}|Z_{+}) =&\Delta (\mathbf{X}\xspace_{+})\sum_{I,\xi ,I_{+},\theta _{+}}\delta _{I_{+}}(\mathcal{L}(\mathbf{X}\xspace_{+}))\nonumber
%		\\
%		&\times\hat{w}_{+}^{(I,\xi ,I_{+},\theta_{+})}(Z_{+}) \left[ p_\bplus\psi _{+}^{(\theta _{+})}\!(\cdot |Z_{+})\right] ^{\mathbf{X}_\bplus}\!\!\nonumber
%		\\
%		&\times\!\frac{\left[ \hat{p}_{+}^{(I,\xi ,I_{+},\theta_{+})}\!(\cdot|Z_{+})\right] ^{\mathbf{X}_\splus\cup \mathbf{X}_\tplus}}{\left[ \bar{p}_{+}^{(I,\xi,I_{+},\theta _{+})}(\cdot |Z_{+})\right]^{I_{+}}} \label{eq:prop2Density}
%		\end{align}
		
	\end{proposition}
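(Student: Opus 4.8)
The plan is to prove the two assertions in turn: first to obtain the \emph{exact} filtering density by feeding the prediction density of Proposition~1 through the Bayes update, and then to pass to its GLMB approximation via Proposition~3 of \cite{Papi2015}, recalled in \eqref{eq:arbLMB}--\eqref{eq:prop3Approx}.

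For the exact density, I would substitute the prediction density \eqref{eq:prop1Dist} and the multi-object likelihood \eqref{eq:multLikeGLMB} into the Bayes update \eqref{eq:approxThis}. The decisive structural fact is that the next-time state splits into disjoint pieces $\mathbf{X}_{+}=\mathbf{X}_\splus\cup\mathbf{X}_\tplus\cup\mathbf{X}_\bplus$, the disjointness being inherited from the mutual disjointness of the label spaces \Lspace, \Tspace and \Bspace. Because the likelihood is a multi-object exponential in $\psi_{+}^{(\theta_{+})}$, it factorizes across this partition: the factor over $\mathbf{X}_\bplus$ merges with the $\left[p_\bplus\right]^{\mathbf{X}_\bplus}$ term of \eqref{eq:prop1Dist} to form $\left[p_\bplus\psi_{+}^{(\theta_{+})}(\cdot|Z_{+})\right]^{\mathbf{X}_\bplus}$, while the factor over $\mathbf{X}_\splus\cup\mathbf{X}_\tplus$ attaches to the survival/spawn term and reproduces precisely $p_{+}^{(I,\xi,\theta_{+})}$ of \eqref{eq:nonBirthDist}. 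Carrying the likelihood's sum over $\theta_{+}\in\Theta_{+}(\mathcal{L}(\mathbf{X}_{+}))$ alongside the existing sum over $(I,\xi)$ then yields the claimed exact form; the normalizing set integral in \eqref{eq:approxThis} is what the symbol $\propto$ absorbs.

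For the approximation, I would read the exact density as an instance of the arbitrary labeled density \eqref{eq:arbLMB}, with index $c=(I,\xi,\theta_{+})$ and with each component density normalized over its label set so that the per-label normalizers $\bar{p}_{+}^{(I,\xi,I_{+},\theta_{+})}(\ell_{+}|Z_{+})$ of \eqref{eq:inProp2_3} are moved into the weight. Applying Proposition~3 of \cite{Papi2015}, i.e.\ \eqref{eq:prop3Approx}, then attaches the label-set index $I_{+}$ through the factor $\delta_{I_{+}}(\mathcal{L}(\mathbf{X}_{+}))$ (so that $\theta_{+}\in\Theta_{+}(I_{+})$) and replaces each joint component by the multi-object exponential of its single-object marginals. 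Here the birth/non-birth split is again what makes the computation possible: the birth factor is already a product, so its single-object marginals are the per-track densities normalized by $\langle p_\bplus(\cdot,\ell_{+}),\psi_{+}^{(\theta_{+})}(\cdot|Z_{+})\rangle$, whereas the survival/spawn factor $p_{+}^{(I,\xi,\theta_{+})}$ is not separable and its marginals must be produced by integrating out the remaining objects, giving $\hat{p}_{+}^{(I,\xi,I_{+},\theta_{+})}$ of \eqref{eq:inProp2_2}. Collecting the normalizers into $\left[\bar{p}_{+}^{(I,\xi,I_{+},\theta_{+})}\right]^{I_{+}}$ delivers the normalized weight \eqref{eq:inProp2_1} and the normalized density \eqref{eq:prop2Density}, with preservation of the first moment and cardinality distribution inherited from \cite{Papi2015}.

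The hard part is the second stage, specifically the marginalization of the entangled survival/spawn density $p_{+}^{(I,\xi,\theta_{+})}$. As the remark following Proposition~1 stresses, a surviving object and all objects spawned from a common parent $\mathbf{x}$ are conditioned on that single parent state and are therefore coupled inside the inner products $\langle\Phi_\splus(\mathbf{X}_\splus|\cdot,\ell)\Phi_\tplus(\mathbf{X}_\tplus|\cdot,\ell),p^{(\xi)}(\cdot,\ell)\rangle$; consequently $p_{+}^{(I,\xi,\theta_{+})}$ does not factor over individual tracks, so one has to check that the single-object marginals \eqref{eq:inProp2_2} are well defined and that the interlocking normalizers in \eqref{eq:inProp2_1}--\eqref{eq:inProp2_3} really do assemble into a bona fide GLMB, each $\hat{p}_{+}(\cdot,\ell_{+})/\bar{p}_{+}(\ell_{+})$ being a genuine single-object density with the component weights summing to one. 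Keeping the restriction of $\theta_{+}$ and the factor $\left[\bar{p}_{+}\right]^{I_{+}}$ consistent between numerator and denominator, while confirming that matching the marginals preserves both the PHD and the cardinality, is where essentially all of the care resides.
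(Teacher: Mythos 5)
Your proposal is correct and follows essentially the same route as the paper: Bayes update of the Proposition~1 prediction density with the likelihood factorizing over the disjoint birth/survival/spawn partition to obtain the exact form, followed by the GLMB approximation of \cite[Proposition 3]{Papi2015} applied only to the non-separable survival/spawn factor (the birth factor being already a product), with the per-label normalizers $\bar{p}_{+}^{(I,\xi,I_{+},\theta_{+})}$ absorbed into the weights. The paper's only additional explicit step is evaluating the normalizing constant $\hat{C}$ via \cite[Lemma 3]{Vo_dGLMB_ConjPrior_2013} to collapse the set integral to the sum over $L=I_{+}$, which your "collecting the normalizers" step implicitly performs.
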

	
	\begin{proof}
		With $g(Z_{+}|\mathbf{X}\xspace_{+})$ from \eqref{eq:multLikeGLMB} and $\boldsymbol{\pi }\xspace(\mathbf{X}\xspace_{+})$ from \eqref{eq:prop1Dist}, we have \begin{align}
		&\boldsymbol{\pi }_+\xspace(\mathbf{X}\xspace_{+}|Z_{+})\nonumber
		\\
		&\hspace{5pt} \propto \boldsymbol{\pi }\xspace(\mathbf{X}\xspace_{+})g(Z_{+}|\mathbf{X}\xspace_{+}),
		\\
		&\hspace{5pt} =\Delta (\mathbf{X}\xspace_{+})\!\!\!\sum_{I,\xi ,\theta_{+}}\!\!w_{+}^{(I,\xi )}(\mathcal{L}(\mathbf{X}\xspace_{+}))p_{+}^{(I,\xi)}(\mathbf{X}\xspace_{+})\nonumber
		\\
		&\hspace{15pt}\times\left[ \psi _{+}^{(\theta _{+})}(\cdot |Z_{+})\right] ^{\mathbf{X}\xspace_{+}},  
		\\
		&\hspace{5pt}=\Delta (\mathbf{X}\xspace_{+})\!\!\!\sum_{I,\xi ,\theta_{+}}\!\!w_{+}^{(I,\xi )}(\mathcal{L}(\mathbf{X}\xspace_{+}))\left[ p_\bplus\psi _{+}^{(\theta _{+})}(\cdot |Z_{+})\right] ^{\mathbf{X}_\bplus}  \notag \\
		&\hspace{15pt} \times \prod_{\ell \in I}\left\langle \Phi _\splus\xspace(\mathbf{X}_\splus|\cdot ,\ell )\Phi _\tplus\xspace(\mathbf{X}_\tplus|\cdot ,\ell ),p^{(\xi )}(\cdot,\ell )\right\rangle \nonumber
		\\
		&\hspace{15pt} \times \left[ \psi _{+}^{(\theta _{+})}(\cdot |Z_{+})\right] ^{\mathbf{X}_\splus\xspace\cup \mathbf{X}_\tplus} \\
		&\hspace{5pt}=\Delta (\mathbf{X}\xspace_{+})\!\!\!\sum_{I,\xi ,\theta_{+}}\!\!w_{+}^{(I,\xi )}(\mathcal{L}(\mathbf{X}\xspace_{+}))\left[ p_\bplus\psi _{+}^{(\theta _{+})}(\cdot |Z_{+})\right] ^{\mathbf{X}_\bplus}  \nonumber
		\\
		&\hspace{15pt}\times p_{+}^{(I,\xi ,\theta _{+})}(\mathbf{X}_\splus\xspace\cup \mathbf{X}_\tplus|Z_{+}). \label{eq:arbJointDensityNew}
		\end{align}
		
		We now apply the \ac{glmb} approximation \eqref{eq:prop3Approx} to \eqref{eq:arbJointDensityNew}, and note that determining the product of the marginals of the single-object birth densities encapsulated in $[ p_\bplus\psi _{+}^{(\theta _{+})}(\cdot |Z_{+})] ^{\mathbf{X}_\bplus}$ is redundant. Only the single-object densities encapsulated in $p_{+}^{(I,\xi ,\theta_{+})}(\mathbf{X}_\splus\xspace\cup \mathbf{X}_\tplus|Z_{+})$ require marginalization. Hence, applying the \ac{glmb} approximation from \eqref{eq:prop3Approx} gives
		\begin{align}
		\hat{\boldsymbol{\pi }\xspace}_+(\mathbf{X}\xspace_{+}|Z_{+}) =& \hat{C} \Delta (\mathbf{X}\xspace_{+})\!\!\sum_{I,\xi,I_{+},\theta _{+}}\!\!\delta _{I_{+}}(\mathcal{L}(\mathbf{X}\xspace_{+}))\nonumber
		\\
		&\hspace{5pt}\times w_+^{(I,\xi )}(\mathcal{L}(\mathbf{X}\xspace_{+}))\left[ p_\bplus\xspace\psi_+^{(\theta_+)}(\cdot|Z_{+})\right]^{\Xbold_\bplus}\nonumber
		\\
		&\hspace{5pt}\times\left[ \hat{p}^{(I,\xi ,I_{+},\theta _{+})}(\cdot|Z_+)\right]^{\Xbold_\splus \cup\Xbold_\tplus} \label{eq:approxJointDensity}
		\end{align}
		where $I_{+}\in \mathcal{F}(\mathbb{L}\xspace_{+})$, $\hat{p}^{(I,\xi ,I_{+},\theta _{+})}(\cdot|Z_+)$ is defined in \eqref{eq:inProp2_2}, and 
		\begin{align}
		\hat{C}^{-1} &=\sum_{I,\xi ,I_{+},\theta _{+}}\!\!\int\!\!\Delta (\mathbf{X}\xspace_{+})\delta _{I_{+}}(\mathcal{L}(\mathbf{X}\xspace_{+}))w^{(I,\xi )}(\mathcal{L}(\mathbf{X}\xspace_{+})) \notag \\
		& \hspace{22pt}\times \left[ p_\bplus\xspace\psi_+^{(\theta_+)}(\cdot |Z_+)\right]^{\Xbold_\bplus}\nonumber
		\\
		&\hspace{22pt}\times\left[ \hat{p}_+^{(I,\xi ,I_{+},\theta _{+})}(\cdot|Z_+)\right] ^{\Xbold_\splus \cup\Xbold_\tplus}\delta \mathbf{X}\xspace_{+} \\
		&=\sum_{I,\xi ,I_{+},\theta_{+}}\sum_{L\subseteq \mathbb{L}\xspace_{+}}\delta _{I_{+}}(L)w_+^{(I,\xi)}(L)  \notag \\
		& \hspace{22pt}\times \prod_{\ell _{+}\in L}\bar{p}_+^{(I,\xi,I_{+},\theta _{+})}(\ell _{+}|Z_+)  \label{eq:GaussEvid_1} \\
		&=\sum_{I,\xi ,I_{+},\theta_{+}}\!\!\!\!w_+^{(I,\xi )}(I_{+})\left[\bar{p}_+^{(I,\xi ,I_{+},\theta _{+})}(\cdot|Z_+ )\right] ^{I_{+}}
		\label{eq:GaussEvid_2}
		\end{align}
		where \eqref{eq:GaussEvid_1} follows from \cite[Lemma 3]{Vo_dGLMB_ConjPrior_2013}, which simplifies in \eqref{eq:GaussEvid_2} since the only non-zero inner summand occurs when $L=I_{+}$, and $\bar{p}_+^{(I,\xi ,I_{+},\theta _{+})}(\ell _{+}|Z_+)$ is defined in \eqref{eq:inProp2_3}. Substituting \eqref{eq:GaussEvid_2} into \eqref{eq:approxJointDensity} we have \eqref{eq:prop2Density} \lbrack see top of page].
	\end{proof}

%----------------------------------------------------------------------------------------------------------------------------------
%\input{wTilde_index}
%----------------------------------------------------------------------------------------------------------------------------------

	%%%%%%%%%%%%%%%%%%%%%%%%%%%%%%%%%%%%%%%%%%%%%%%%%%%%%%%%%%%%%%%%%%%%%%%%%%%%%%%%%%%%%%%%%%%%%%%%%%%%%%%%%%%%%%%%%%%%%%%%%%%%%%
	%                                        																					 %
	%%%%%%%%%%%%%%%%%%%%%%%%%%%%%%%%%%%%%%%%%%%%%%%%%%%%%%%%%%%%%%%%%%%%%%%%%%%%%%%%%%%%%%%%%%%%%%%%%%%%%%%%%%%%%%%%%%%%%%%%%%%%%%
	\subsection{Efficient Implementation \label{sec:Implement}}
	
	Expanding on the material discussed in Section~\ref{sec:fastImplBack}, in this section we leverage the joint prediction and update of the fast GLMB implementation to truncate the number of components used to generate the updated multi-object density in~\eqref{eq:prop2Density}. Note that a \ac{glmb} density of form \eqref{eq:dglmbDist} at the current time can be represented by% 
	\begin{equation}  \label{eq:paramFormat_prior}
	\{(I^{(h)},\xi^{(h)}, w^{(h)},p^{(h)})\}_{h=1}^{H},
	\end{equation}
	which is an enumeration of the set of density parameters $\{(w^{(I,\xi)},p^{(\xi)}):(I,\xi)\in \mathcal{F}(\mathbb{L} \xspace)\times\Xi\}$ where $w^{(h)}\triangleq w^{(I^{(h)},\xi^{(h)})}$ and $p^{(h)} \triangleq p^{(\xi^{(h)})}$. The objective is to generate a parameter set%
	\begin{equation}  \label{eq:goalSet}
	\{(I_+^{(h_+)},\xi_+^{(h_+)}, w_+^{(h_+)},p_+^{(h_+)})\}_{h_+=1}^{H_+}
	\end{equation}%
	that represents the GLMB density at the next time given by~\eqref{eq:prop2Density}. Following the development in \cite[Section III-E]{Vo_2016_fast_dGLMB}, the first step toward doing so requires drawing $H_+^{\max}$ samples from the distribution $\pi$ given as%
	\begin{equation}  \label{eq:initSample}
	\pi(I,\xi) \propto w^{(I,\xi)},
	\end{equation}%
	noting that $T_+^{(h)} \xspace$ duplicates of a distinct sample $(I^{(h)} \xspace,\xi^{(h)} \xspace)$ can be drawn. Then, we determine a set of $\tilde{T}_+^{(h)} \xspace$ candidate components of the form%
	\begin{equation}  \label{eq:candComps}
	\{(I_+^{(h,t)} \xspace,\theta_+^{(h,t)} \xspace)\}_{t=1}^{\tilde{T}_+^{(h)} \xspace}
	\end{equation}%
	from each (distinct) $(I^{(h)} \xspace,\xi^{(h)} \xspace)$ that together yield significant weights, as in $\hat{w}_+^{(I,\xi,I_+,\theta_+)}(Z_+)$ in ~\eqref{eq:prop2Density}.

	Recall from Section~\ref{sec:fastImplBack} that a set of positive 1-1 vectors $\gamma$ specifies a significant weight $\hat{w}_+^{(I,\xi,I_+,\theta_+)}(Z_+)$, if $\gamma$ generates a significant $\omega^{(\gamma)} = \prod_{i=1}^P\eta_i(\gamma_i)$. Hence, determining a set of candidate components amounts to finding sets of $\gamma$'s that yield $\omega^{(\gamma)}$ above a given threshold; it follows that such vector sets can be generated using~\eqref{eq:prop2Density}. However, complexity of the GLMB filter is naturally increased with the inclusion of spawn modeling and the involvement of marginalization in the truncation procedure is inefficient, especially in cases where many GLMB components are ultimately discarded. Therefore, we exploit the following proposal density for the purpose of generating candidate components~\eqref{eq:candComps} in an effort to offset complexity and minimize inefficiency.%
	\begin{definition} % ||||||||||||||||||||||||||||||||||||||||||||| DEFINITION %%%%%%%%%%%%%%%%%%%%%%%%%%%%%%%%%%%%%%%%%%%%%%%%%%%%%%%%%%%%%%%%%%%%%%%%%%%%%%%
	Given a \ac{glmb} density \eqref{eq:dglmbDist} at the current time, let the proposal density $\tilde{\boldsymbol{\pi}}$ at the next time be of form \eqref{eq:fastJointDensity} such that% 
	\begin{align}  \label{eq:fastJointDensitySpawn}
	\tilde{\boldsymbol{\pi} \xspace}_+(\mathbf{X}_+|Z_+) \propto &\Delta(\mathbf{X} \xspace_+)\sum_{h=1}^{H}\sum_{t=1}^{\tilde{T}_+^{(h)} \xspace}w^{(h)} \xspace\tilde{w}_+^{(h,t)}(Z_+) \xspace  \notag \\
	&\times\delta_{I_+^{(h)} \xspace}(\mathcal{L}(\mathbf{X} \xspace_+))\left[\tilde{p}^{(h,t)}(Z_+) \xspace\right]^{\mathbf{X} \xspace_+},
	\end{align}%
	where $\psi_+^{(h,t)}\triangleq\psi_+^{(\theta_+^{(h,t)})}$, $\tilde{\psi}_+^{(h,t)}\triangleq\tilde{\psi}_+^{(\theta_+^{(h,t)})}$,
	\begin{flalign}  
	\tilde{w}_+^{(h,t)}(Z_+) \!&= \left[r_\bplus \xspace\right]^{\mathbb{B}_+ \xspace\cap I_+^{(h,t)} \xspace}\left[1-r_\bplus \xspace\right]^{\mathbb{B}_+ \xspace-I_+^{(h,t)} \xspace} \nonumber &
	\\
	&\hspace{5pt}\times\left[\bar{p}_\mathrm{S} \xspace^{(h)} \xspace\right]^{I^{(h)} \xspace\cap I_+^{(h,t)}\xspace}\left[1-\bar{p}_\mathrm{S} \xspace^{(h)} \xspace\right]^{I^{(h)} \xspace-I_+^{(h,t)} \xspace}\  \nonumber &
	\\
	&\hspace{5pt}\times\left[\bar{p}_\mathrm{T} \xspace^{(h)} \xspace\right]^{\mathbb{T}_+ \xspace\cap I_+^{(h,t)} \xspace}\left[1-\bar{p}_\mathrm{T} \xspace^{(h)} \xspace\right]^{\mathbb{T}_+ \xspace-I_+^{(h,t)} \xspace}\nonumber & 
	\\
	&\hspace{5pt}\times\left[\tilde{\psi}_+^{(h,t)}(\cdot|Z_+) \xspace\right]^{I_+^{(h,t)} \xspace},\label{eq:wTile_spawn}&
	\\
	\tilde{p}^{(h,t)} (x_+,\ell_+|Z_+\!) \!&= \frac{\tilde{p}_+^{(h)} \xspace(x_+,\ell_+)\psi^{(h,t)}(x_+,\ell_+|Z_+)}{\tilde{\psi}^{(h,t)} (x_+,\ell_+|Z_+)},&
	\\
	\tilde{p}_+^{(h)} \xspace(x_+,\ell_+) \!&=1_{\mathbb{B}_+ \xspace}(\ell_+)p_\bplus(x_+,\ell_+)  \notag &
	\\
	&\hspace{10pt}+1_\mathbb{L} \xspace(\ell_+)\tilde{p}_\mathrm{S} \xspace^{(h)}(x_+,\ell_+) \xspace \nonumber&
	\\
	&\hspace{10pt}+1_{\mathbb{T}_+ \xspace}(\ell_+)\tilde{p}_\mathrm{T} \xspace^{(h)}(x_+,\ell_+) \xspace,&
	\\
	\tilde{p}_\mathrm{S}^{(h)}\!(x_+,\!\ell_+) \!&=\!\! \frac{\langle p_\mathrm{S}\xspace(\cdot,\!\ell_+\!)f_\splus \!(x_+|\cdot,\ell),p^{(h)} \!(\cdot,\!\ell_+)\rangle}{\bar{p}_\mathrm{S} \xspace^{(h)} \xspace(\ell_+)},&
	\\
	\tilde{p}_\mathrm{T} \xspace^{(h)}(x_+,\ell_+) \!&= \frac{\langle p_\mathrm{T}\xspace(\ell_+)f_\tplus \xspace(x_+|\cdot,\ell),p^{(h)} \xspace(\cdot,\ell)\rangle}{\bar{p}_\mathrm{T} \xspace^{(h)} \xspace(\ell_+)},&
	\\
	\bar{p}_\mathrm{S} \xspace^{(h)} \xspace(\ell_+) \!&= \langle p^{(h)} \xspace(\cdot,\ell),p_\mathrm{S} \xspace(\ell_+) \rangle,&
	\\
	\bar{p}_\mathrm{T}\xspace^{(h)} \xspace(\ell_+) \!&= \langle p^{(h)} \xspace(\cdot,\ell),p_\mathrm{T} \xspace(\ell_+) \rangle,&
	\\
	\tilde{\psi}^{(h,t)} \!(x_+,\ell_+|Z_+\!) \!&=\!\langle \tilde{p}_+^{(h)} \!(x_+,\ell_+),\psi^{(h,t)}\!(x_+,\ell_+|Z_+\!)\rangle.&
	\end{flalign}
	\end{definition}%
%	\begin{align}
%	\hspace*{-6pt}
%	\end{align}
%	
	
	We enumerate $Z_+=\{z_{1:|Z_+| \xspace}\}$, $\mathbb{B}_+ \xspace=\{\ell_{1:K}\}$, $I^{(h)} \xspace=\{\ell_{K+1:L}\}$, along with the additional set of spawn labels at the next time $\Tspace(I^{(h)})=\{\ell_{L+1:P}\}$. Next we define a $P$ dimensional vector $\gamma^{(h,t)}$ that inherits the 1-1 mapping from $\theta_+^{(h,t)}$ (see~\eqref {eq:gamma}), then we recover $(I_+^{(h,t)} \xspace,\theta_+^{(h,t)} \xspace)$ via 
	\begin{subequations}
	\label{eq:g2IT}
	\begin{align}
	\hspace{-5pt}I_+^{(h,t)} \xspace&=\{\ell_i\in \mathbb{B}_+ \xspace\cup I^{(h)} \xspace\cup \Tspace(I^{(h)}):\gamma_i^{(h,t)} \xspace\geq 0\},
	\\
	\hspace{-5pt}\theta_+^{(h,t)} \xspace(\ell_i)&=\gamma_i^{(h,t)} \xspace,
	\end{align}
	such that $\theta_+^{(h,t)} \xspace:I_+^{(h,t)} \xspace\rightarrow\{0:|Z_+| \xspace\}$. Then, for each $j\in\{-1:|Z_+| \xspace\}$ define 
	\end{subequations}
	\begin{equation}  \label{eq:bstCost}
	\hspace*{-5pt}\eta_i^{(h)} \xspace(j) = 
	\begin{cases}
	1-r_\bplus \xspace(\ell_i), &\hspace{-8pt} \ell_i\!\in\!\mathbb{B}_+ \xspace, j\!<\!0, \\ 
	r_\bplus \xspace(\ell_i)\tilde{\psi}_+^{(h,j)}\!(\ell_i|Z_+\!), &\hspace{-8pt} \ell_i\!\in\!\mathbb{B}_+ \xspace, j\!\geq \!0, \\ 
	1-\bar{p}_\mathrm{S} \xspace^{(h)} \xspace(\ell_i), &\hspace{-8pt} \ell_i\!\in\! I^{(h)} \xspace, j<0, \\ 
	\bar{p}_\mathrm{S} \xspace^{(h)} \xspace(\ell_i)\tilde{\psi}_+^{(h,j)}\!(\ell_i|Z_+\!), &\hspace{-8pt} \ell_i\!\in\! I^{(h)} \xspace, j\!\geq \!0, \\ 
	1-\bar{p}_\mathrm{T} \xspace^{(h)} \xspace(\ell_i), &\hspace{-8pt} \ell_i\in\! \Tspace(I^{(h)}), j\!<\!0, \\ 
	\bar{p}_\mathrm{T} \xspace^{(h)} \xspace(\ell_i)\tilde{\psi}_+^{(h,j)}\!(\ell_i|Z_+\!), &\hspace{-8pt} \ell_i\!\in\! \Tspace(I^{(h)}), j\!\geq\! 0,
	\end{cases}
	\end{equation}
	assuming that, for all $i\in\{1:P\}$, $\bar{p}_\mathrm{S} \xspace^{(h)} \xspace(\ell_i)\in(0,1)$, $\bar{p}_\mathrm{T} \xspace^{(h)} \xspace(\ell_i)\in(0,1)$, and $\bar{p}_\dplus \xspace^{(h)} \xspace(\ell_i)\triangleq\left\langle \tilde{p}_+^{(h)} \xspace(\cdot,\ell_i),p_\dplus \xspace(\cdot,\ell_i) \right\rangle \in (0,1)$. 	We use~\eqref{eq:bstCost} in conjunction with the Gibbs sampler to yield \textit{mostly} high-weighted positive 1-1 vectors $\gamma^{(h,t)} \xspace$, then we convert each $\gamma^{(h,t)} \xspace$ to candidate component $(I_+^{(h,t)} \xspace,\theta_+^{(h,t)} \xspace)$ using~\eqref{eq:g2IT}. 
	
	Moving forward, the candidate components we determine using the proposal density~\eqref{eq:fastJointDensitySpawn}-\eqref{eq:bstCost} are subsequently used to generate the GLMB density in~\eqref{eq:prop2Density}. For each sample component $(I^{(h)},\xi^{(h)})$ and each of its candidate components~\eqref{eq:candComps} formed using~\eqref{eq:g2IT} and~\eqref{eq:bstCost}, we generate intermediate components of form  $(I_+^{(h,t)} \xspace,\xi_+^{(h,t)} \xspace,w_+^{(h,t)} \xspace,p_+^{(h,t)} \xspace)$. Letting
	\begin{equation*}
	\hat{p}_+^{(h,t)}\triangleq \hat{p}_+^{(I^{(h)},\xi^{(h)},I_+^{(h,t)},\theta_+^{(h,t)})}
	\end{equation*}
	and 
	\begin{equation*}
	\bar{p}_+^{(h,t)}\triangleq \bar{p}_+^{(I^{(h)},\xi^{(h)},I_+^{(h,t)},\theta_+^{(h,t)})},
	\end{equation*}
	we compute
	\begin{flalign}
	\bar{w}_+^{(h,t)} &= w^{(h)}\left[r_\bplus \xspace\right]^{\mathbb{B}_+ \xspace\cap I_+^{(h,t)} \xspace}\left[1-r_\bplus \xspace\right]^{\mathbb{B}_+ \xspace-I_+^{(h,t)} \xspace}\nonumber
	\\
	&\hspace{10pt}\times\left[ \bar{p}_+^{(h,t)}(\cdot|Z_+)\right]^{I_+^{(h,t)}}, \label{eq:wBlast}&
	\\
	p_+^{(h,t)}(\cdot,\ell_i) &\propto 1_{\Bspace}(\ell_i) p_\bplus(\cdot,\ell_i)\psi_+^{(h,t)}(\cdot,\ell_i|Z_+) \nonumber&
	\\
	&\hspace{15pt}+ (1-1_{\Bspace}(\ell_i))\hat{p}_+^{(h,t)}(\cdot,\ell_i|Z_+), \label{eq:p+last}&
	\end{flalign}
%	p_+^{(h,t)} &= \left[p_\bplus\psi_+^{(h,t)}\right]^{\Bspace\cap I_+^{(h,t)}} \left[\hat{p}_+^{(h,t)}\right]^{I_+^{(h,t)}-\Bspace}, \label{eq:p+last}
	define $\xi_+^{(h,t)} = (\xi^{(h)},\theta_+^{(h,t)})$, and let 
	\begin{equation}
	\hat{C} = \sum_{h,t} \bar{w}_+^{(h,t)}\left[ \bar{p}_+^{(h,t)}(\cdot|Z_+)\right]^{I_+^{(h,t)}}\label{eq:normConst}
	\end{equation} 
	be a normalizing constant. 
	Equations~\eqref{eq:wBlast}-\eqref{eq:normConst} follow directly from~\eqref{eq:prop2Density}-\eqref{eq:inProp2_3}. Algorithm~1 (see appendix) summarizes the joint prediction and update procedure for one iteration, including how the intermediate component set $\{(I_+^{(h,t)} \xspace,\xi_+^{(h,t)} \xspace,w_+^{(h,t)} \xspace,p_+^{(h,t)} \xspace)\}_{h,t=1,1}^{H,\tilde{T}_+^{(h)} \xspace}$ is formed.
	
	Note that a given set $\{(I_+^{(h,t)} \xspace,\xi_+^{(h,t)} \xspace, p_+^{(h,t)} \xspace)\}_{h,t=1,1}^{H,\tilde{T}_+^{(h)} \xspace}$ may not be unique, accordingly, the parameter set~\eqref{eq:goalSet} is determined by summing all $w_+^{(h,t)} \xspace$ with $(I_+^{(h,t)} \xspace,\xi_+^{(h,t)} \xspace) = (I_+^{(h_+)},\xi_+^{(h_+)})$, then normalizing the weights $w_+^{(h,t)} \xspace$. This procedure follows from the relationship presented in~\eqref{eq:makeDGLMB} and is summarized in Algorithm 2 , which comes from the bottom portion of \cite[Algorithm 2]{Vo_2016_fast_dGLMB} and is replicated here for convenience. All algorithms are relegated to the Appendix and follow the format of those presented in \cite{Vo_2016_fast_dGLMB} in the interest of consistency. Additionally, we use the $\mathrm{Gibbs}$ and $\mathrm{Unique}$ functions as described in \cite[Algorithm 1]{Vo_2016_fast_dGLMB} and \cite[Section III-E]{Vo_2016_fast_dGLMB}, respectively.

%%%%%%%%%%%%%%%%%%%%%%%%%%%%%%%%%%%%%%%%%%
%                                        %
%	SIMULATIONS	                         %
%                                        %
%%%%%%%%%%%%%%%%%%%%%%%%%%%%%%%%%%%%%%%%%%
\section{Simulation\label{sec:simulation}}

A linear Gaussian example is used to verify the proposed \ac{glmb} filter and compare its performance to the CPHD filter; both filters incorporate object spawning. Fig.~\ref{fig:scenario} illustrates the multiple trajectories in a $[-1000,1000]\,\SI{}{\meter}\times[-1000,1000]\,\SI{}{\meter}$ surveillance region considered in this scenario. Over the $\SI{100}{\second}$ scenario duration, the number of objects varies due to birth, spawning, and death. In total, there are 6 spontaneous births and 6 spawning events. 

At the start, an object is born in each of the three birth regions. Each birth object goes on to generate a single first generation spawn, then dies. After crossing at the origin at time $k=45$, each spawn object generates a single second generation spawn. Towards the end, an object appears in each birth region that goes on to cross paths with a second generation spawn; crossings occur at times $k=82, k=84, \text{ and } k=86$ at positions $(-250,-433)$, $(-260,430)$, and $(507,26)$, respectively.

\begin{figure}[]
\centering
\includegraphics[width=\columnwidth]{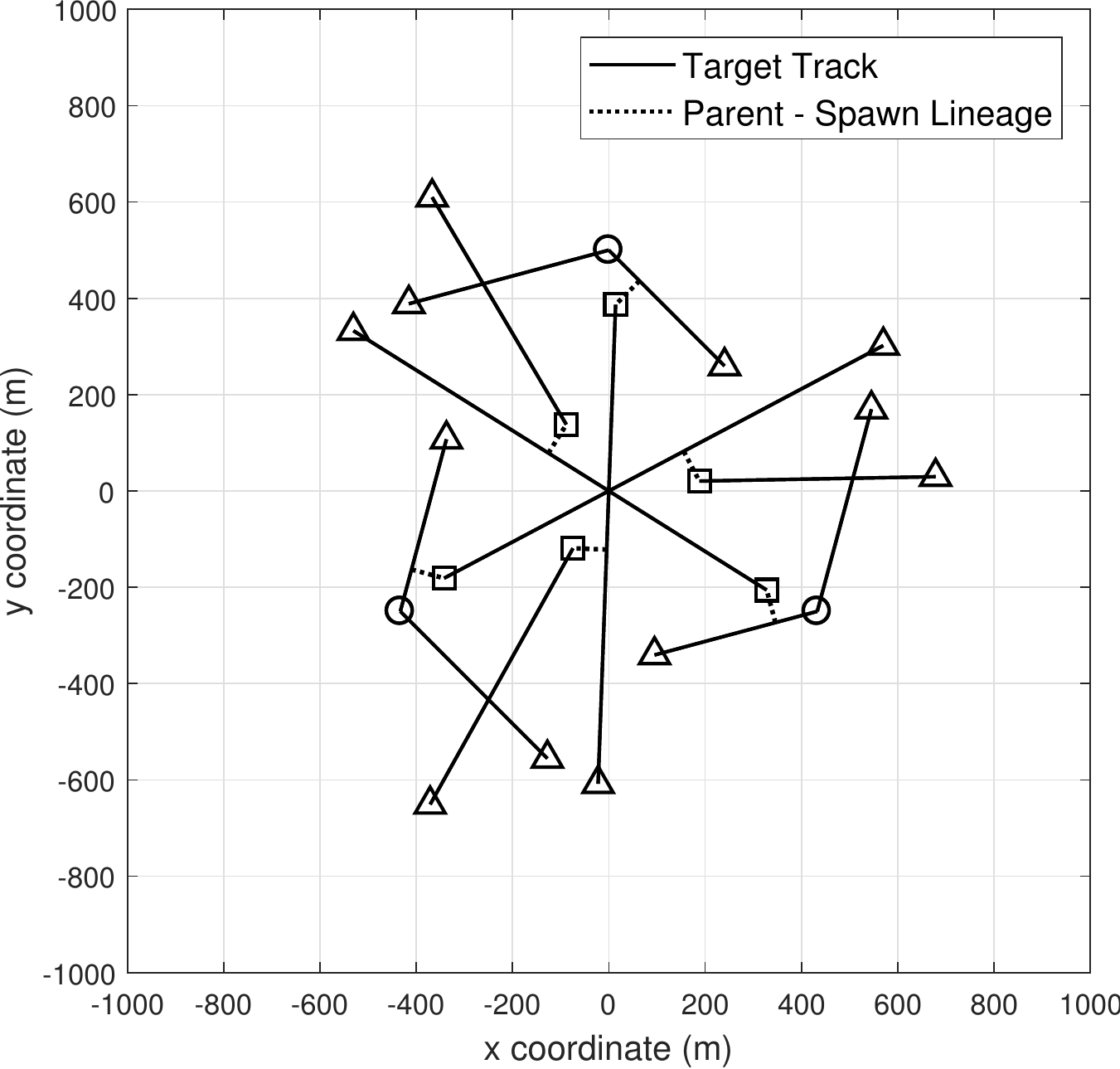}
\caption{Object trajectories in the $xy$ plane. A circle ``$\Circle $'' indicates where an object is born, a square ``$\square $'' indicates where a spawned object may be detected, and a triangle ``$\triangle $'' indicates where an object dies. }
\label{fig:scenario}
\end{figure}

The single-object state describing an object's planar position and velocity coordinates is $x_+ = \left[p_{x,+},p_{y,+},\dot{p}_{x,+},\dot{p}_{y,+}\right]^T$. Each object has a probability of survival $p_{\mathrm{S}} = 0.99$ and follows linear Gaussian dynamics with transition density $f_{\S,+}(x_+,\ell_+|x,\ell) = \mathcal{N}\left(x_+;Fx,Q\right)$ such that 
\begin{equation*}
F = 
\begin{bmatrix}
I_2 & \Delta I_2 \\ 
0_2 & I_2%
\end{bmatrix}%
,\hspace{1cm} Q = \sigma_\nu^2%
\begin{bmatrix}
\frac{\Delta^4}{4}I_2 & \frac{\Delta^3}{2}I_2 \\ 
\frac{\Delta^3}{2}I_2 & \Delta^2I_2%
\end{bmatrix}
,
\end{equation*}
where $\Delta = \SI{1}{\second}$, $\sigma_\nu = \SI{1}{\meter\second\tothe{-2}}$, and $I_n$ and $0_n$ denote the $n\times n$ identity and zero matrices, respectively.

Each object is detected with probability $p_{\mathrm{D} \xspace,+} = 0.88$ and each object generated measurement $z_+ = \left[z_{x,+},z_{y,+}\right]^T$ consists of the object's position with noise added to each component. Measurements follow the linear Gaussian measurement model $g_+(z_+|x_+,\ell_+) =  \mathcal{N}(z_+;Hx_+,R)$ such that 
\begin{equation*}
H = \left[I_2 \hspace{0.3cm} 0_2\right], \hspace{1cm} R =
\sigma_\epsilon^2I_2,
\end{equation*}
where $\sigma_\epsilon = \SI{10}{\meter}$. Clutter is modeled as a Poisson \ac{rfs} with an average intensity of $\lambda_c = 1.65\times10^{-5}\SI{}{\meter\tothe{-2}}$ yielding an average of $66$ clutter returns per scan.

Objects can appear either by birth or spawning. The birth model is an \ac{lmb} \ac{rfs} with parameters $\pi_\bplus \xspace=\{r_\bplus(\ell_i),p_\bplus (\ell_i)\}_{i=1}^3$ where $r_\bplus (\ell_i) = 0.02$ and $p_\bplus(\ell_i) = \mathcal{N}(x;m_\mathrm{B} \xspace^{(i)},P_\mathrm{B} \xspace)$ with $m_\mathrm{B} \xspace^{(1)} = \left[0,500,0,0\right]^T$, $m_\mathrm{B} \xspace^{(2)} = \left[433,-250,0,0\right]^T$, $m_\mathrm{B} \xspace^{(3)} = \left[-433,-250,0,0\right]^T$, and $P_\mathrm{B} \xspace = \sigma_\mathrm{B} \xspace^2I_4$ where $\sigma_\mathrm{B} \xspace = 10$. 

Given a parent state $\xbold=(x,\ell)$ at the current time $k$ and setting $M_\ell=1$, from~\eqref{eq:Tequation}, the set of spawn labels at the next time $k+1$ is defined as
\begin{equation*}
\Tspace(\ell) = \{(\ell,k+1)\times\{1\}\} = \{(\ell,k+1,1)\}.
\end{equation*}
Additionally, we set the probability of detection constant, i.e., $p_\T\triangleq p_\T(\xbold;\varsigma)$. Then, the spawn model is a conditional \ac{lmb} \ac{rfs} with parameters $\{(p_\T,f_\tplus(\cdot |\mathbf{x;}\varsigma)):\varsigma \in\mathbb{T}_{+}(\ell)\}$ where $p_\mathrm{T} \xspace = 0.01$ and $f_{\mathrm{T},+} \xspace(\cdot | x,\ell;\varsigma) = \sum_{i=1}^3\mathcal{N}(\cdot ;Fx + d_\mathrm{T} \xspace^{(i)},Q_\mathrm{T})$ with $Q_\mathrm{T}= \sigma_\mathrm{T} \xspace^2I_4$, $\sigma_\mathrm{T} \xspace = 5$.  Each $d_\mathrm{T} \xspace^{(i)}$ is configured such that a spawn track with zero velocity is generated at a
distance of $\SI{70}{\meter}$ from a parent state $x_k$ and in a direction relative to the parent's bearing $\theta$, i.e.,% 
\begin{equation*}
d_\T^{(i)} = [70\cos(\theta + \phi^{(i)}), 70\sin(\theta + \phi^{(i)}), -\dot{p}_{x,k}, -\dot{p}_{y,k}]^T
\end{equation*}
where $\phi^{(1)}$%
$=$ $\SI{-80}{\deg}$, $\phi^{(2)} = \SI{-90}{\deg}$, and $\phi^{(3)} = \SI{-100}{\deg}$.

The maximum number of GLMB filter components is capped at $1000$. Using the Gibbs sampler to randomly generate hypotheses, the probabilities of survival and detection are tempered with values set to $\breve{p}_{\mathrm{S} \xspace,k} = 0.90p_{\mathrm{S} \xspace,k}$ and $\breve{p}_{\mathrm{D}\xspace,k} = 0.90p_{\mathrm{D} \xspace,k}$, respectively. This induces the sampler to yield more track termination and miss detection hypotheses, which expedites the termination of truly dead tracks while reducing the occurrence of dropped tracks. For more details on tempering techniques, see \cite{Vo_2016_fast_dGLMB}. The CPHD filter is configured with a Bernoulli spawn model following the presentation in \cite{Bryant_CPHD_Spawn} and is capped at 1000 Gaussian mixture components.

Results are presented for 100 Monte Carlo simulations. The mean and standard deviation of cardinality estimates over time are shown in Figs.~\ref{fig:dglmb_card} and~\ref{fig:cphd_card}, while mean \ac{ospa} \cite{schuhmacher2008consistent} distances are shown in Fig.~\ref{fig:ospa}. Mean \ac{ospa} localization and cardinality components are shown in Fig.~\ref{fig:locANDcar}.
\begin{figure}[h]
	\centering
	\includegraphics[width=1.0\columnwidth]{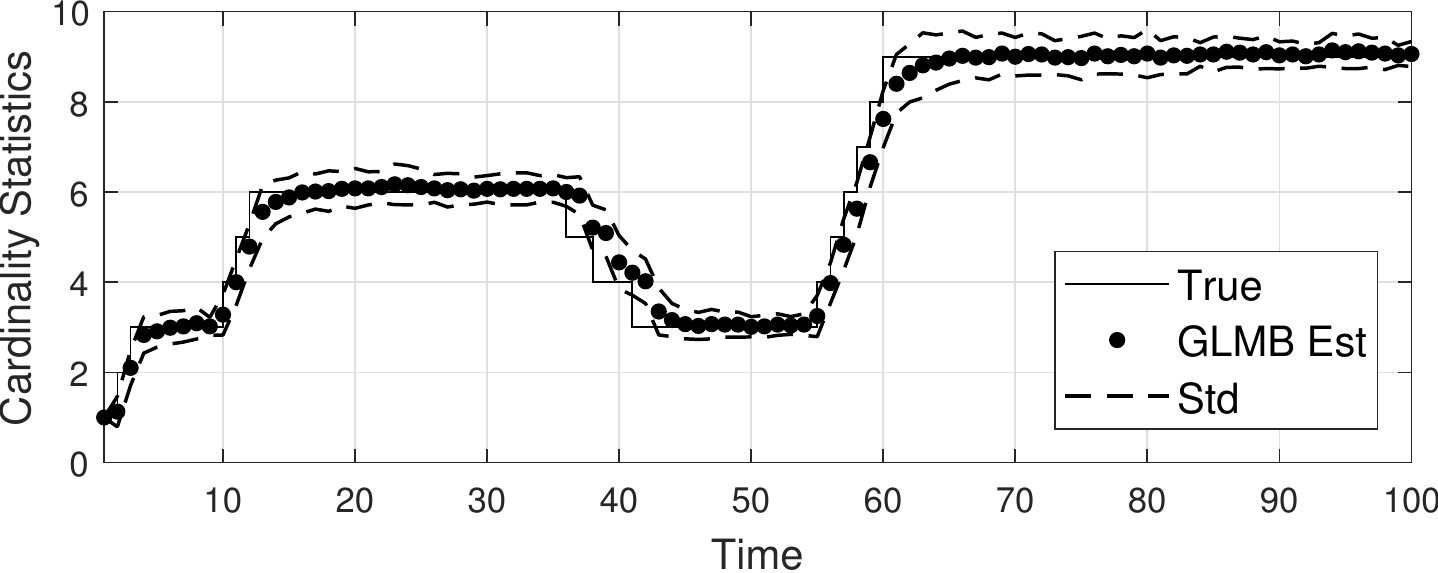}
	\caption{Cardinality statistics for GLMB filter (100 Monte Carlo trials). }
	\label{fig:dglmb_card}
\end{figure}
%\FloatBarrier
\begin{figure}[h]
	\centering
	\includegraphics[width=1.0\columnwidth]{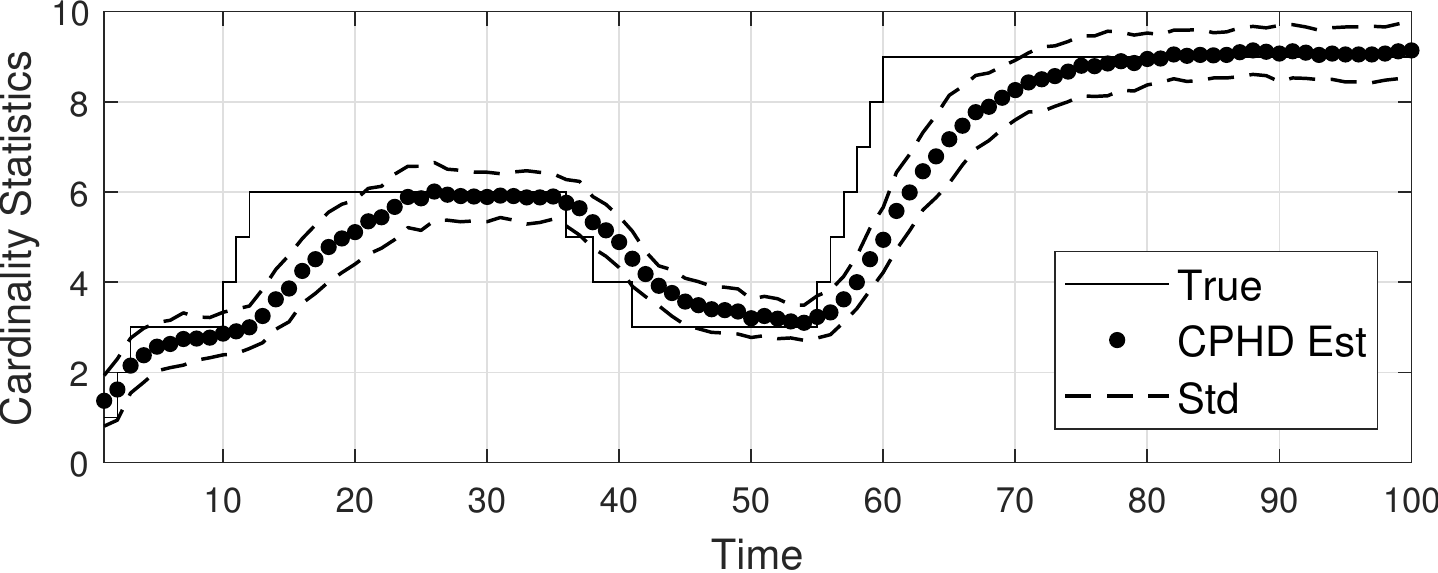}
	\caption{Cardinality statistics for CPHD filter (100 Monte Carlo trials). }
	\label{fig:cphd_card}
\end{figure}
%\FloatBarrier
\begin{figure}[h]
	\centering
	\includegraphics[width=1.0\columnwidth]{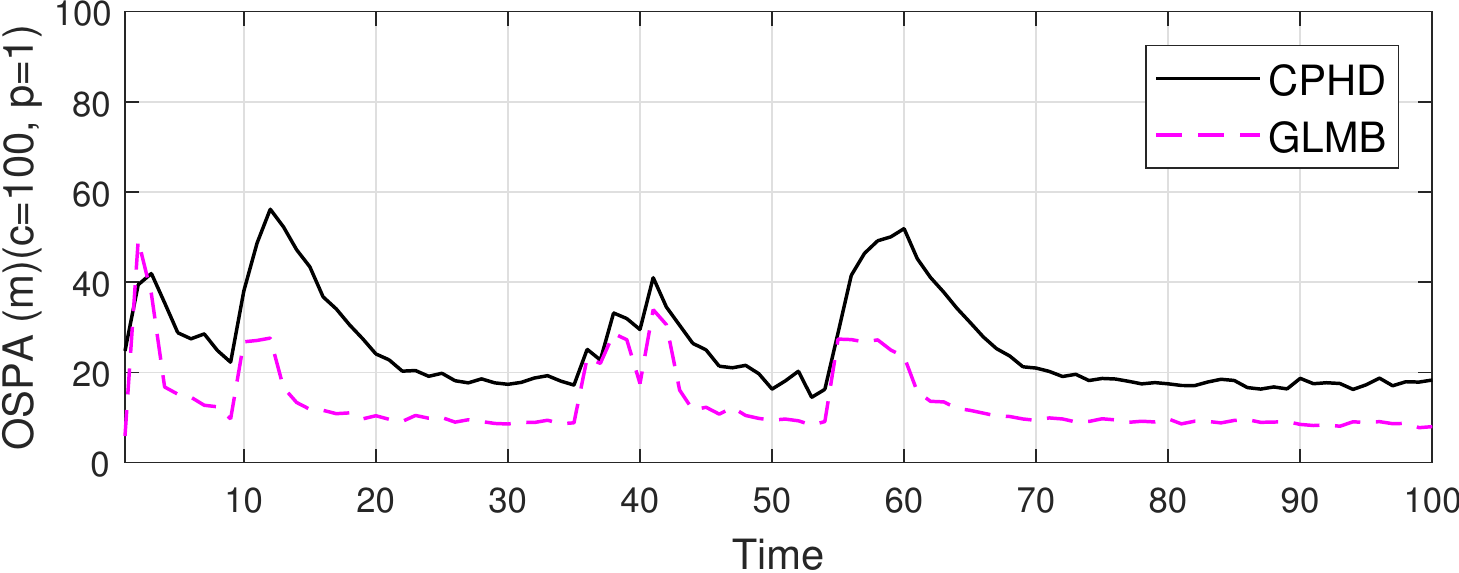}
	\caption{OSPA distance for GLMB and CPHD filters (100 Monte Carlo trials). }
	\label{fig:ospa}
\end{figure}
%\FloatBarrier
\begin{figure}[h]
	\centering
	\includegraphics[width=1.0\columnwidth]{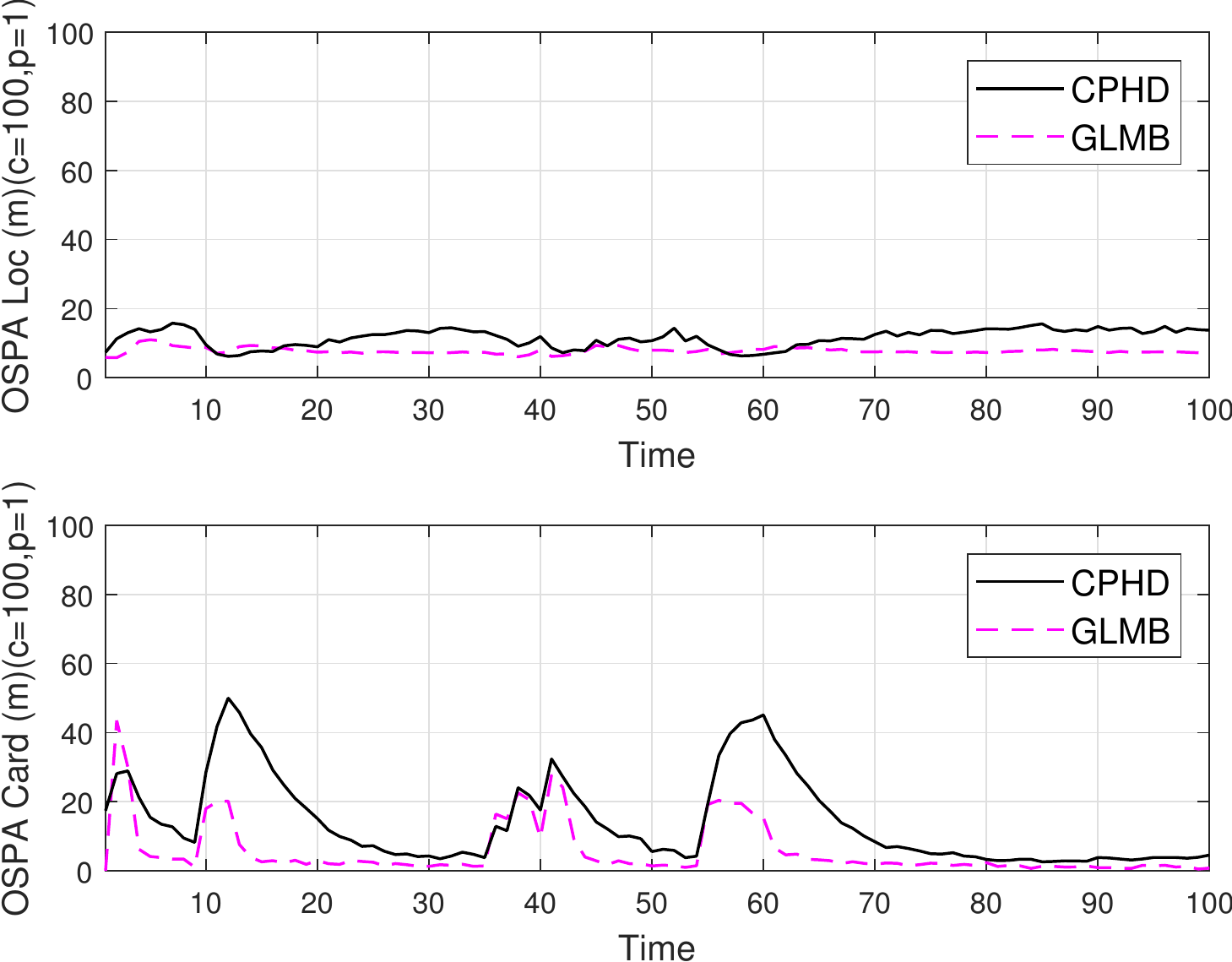}
	\caption{OSPA components for GLMB and CPHD filters (100 Monte Carlo trials). }
	\label{fig:locANDcar}
\end{figure}
%\FloatBarrier

Similar to the results presented in \cite{Vo_dGLMB_Sequel_2014}, both filters accurately estimate cardinality with the GLMB filter providing a better cardinality variance estimate. The GLMB also exhibits better miss distance performance throughout the majority of
the simulation. From Fig.~\ref{fig:locANDcar} we see that the GLMB outperforms the CPHD in both cardinality and localization components overall. Consistent with the assessment in \cite{Vo_dGLMB_Sequel_2014}, lower estimated cardinality variance promotes improved cardinality performance. Improved localization performance is attributed to the GLMB's ability to propagate the filtering density more accurately and its immunity to the ``spooky-effect". The CPHD filter is susceptible to this effect where in the event of a miss detection the PHD \textit{mass} shifts away from undetected components to detected ones, regardless of the distance between them \cite{5310327, Vo_dGLMB_Sequel_2014}.

The proposed GLMB filter's ability to capture ancestry information is demonstrated in Fig.~\ref{fig:wholeFig}. For each Monte Carlo run, the final label estimates at time $k=100$ are compared to the set of true  labels of the same time which are presented in Table~\ref{tab:Lend}. The true labels are presented horizontally and are enumerated by the far left column. Shading is used to divide labels 1-6 into segments correlating with birth and spawning events in Fig.~\ref{fig:wholeFig}, where red, green, and yellow represent a birth track from which spawn originate, first generation spawn, and second generation spawn, respectively. Labels 7-9 are not considered for this analysis since they have no ancestry.

When comparing the final label estimates at time $k=100$ of each Monte Carlo run with the truth, estimated labels are segregated by common ancestry, e.g., from Table~\ref{tab:Lend}, labels 1 and 4 belong to one group, labels 2 and 5 belong to another, etc. For a given label group, the time of birth for their common birth ancestor track may differ from the truth, therefore it is necessary to trace the common ancestor track's label and state estimates toward the beginning of the given Monte Carlo run. Comparing this time history of state estimates with the truth, the originating birth region is determined. Then, the birth track's time of death is found by tracing its label from the beginning of the given run to the point in time when its label is no longer present in the set of label estimates. The remaining event times represented in Fig.~\ref{fig:wholeFig}, i.e., times of birth and spawning, are extracted from the labels of a final label estimates at time $k=100$ of the given Monte Carlo run.

\def\arraystretch{1.1}
\newcolumntype{g}{>{\columncolor{red!50}}c}
\newcolumntype{e}{>{\columncolor{lightGreen}}c}
\newcolumntype{f}{>{\columncolor{yellah}}c}
\begin{table}[H]
	\caption{Label Ground Truth at Time $k=100$\label{tab:Lend}}
	\centering
	\begin{tabular}{@{}cggeeff@{}}\toprule
		\rowcolor{white}
		Label \# &&&Label&&&\\
		\cmidrule{1-7} 	
		1 & 1 & 1 & 10 & 1 & 56 & 1  \\
		2 & 2 & 2 & 11 & 1 & 58 & 1  \\
		3 & 3 & 3 & 12 & 1 & 60 & 1  \\
		4 & 1 & 1 & 10 & 1 & \cellcolor{white}   & \cellcolor{white}   \\
		5 & 2 & 2 & 11 & 1 & \cellcolor{white}   & \cellcolor{white}   \\
		6 & 3 & 3 & 12 & 1 & \cellcolor{white}   & \cellcolor{white}   \\
		\rowcolor{white}		7 & 55 & 3 &  &  &  & \\
		\rowcolor{white}		8 & 57 & 1 &  &  &  & \\
		\rowcolor{white}		9 & 59 & 2 &  &  &  & \\
		\bottomrule
	\end{tabular}
\end{table}

The formats of Figs.~\ref{fig:anc_sub1}-\ref{fig:anc_sub3} are generally the same. Each birth region's true ancestry tree is at the far left and is aligned with a gridded area to the right with markers indicating GLMB filter estimated birth, death, and spawn times for 100 Monte Carlo runs. Each figure corresponds to one of the three modeled birth regions as indicated by the true track labels.
%Each birth region's true ancestry tree is at the far left and begins with a red circle at the top. Moving down the vertical axis in time, the first and second generation spawn times are denoted by a green square and yellow diamond, respectively, while the birth track's time of death is marked by a pink diamond. Moving from left to right across the gridded area, markers indicate GLMB filter estimated birth, death, and spawn times for 100 Monte Carlo runs. Each figure corresponds to one of the three modeled birth regions as indicated by the true track labels.
\begin{figure}[t!]
	\begin{center}
		\subfigure[Birth Region 1]{%
			\label{fig:anc_sub1}
			\begin{tikzpicture}
			\tikzstyle{legBox}=[draw, rectangle,minimum height=1cm, minimum width=6cm, fill=white]
			\tikzstyle{legEntry}=[draw, white,rectangle,fill=white,text=black]

			\node (ancestry) at (0,0)
			{\includegraphics[width = 0.93\columnwidth]{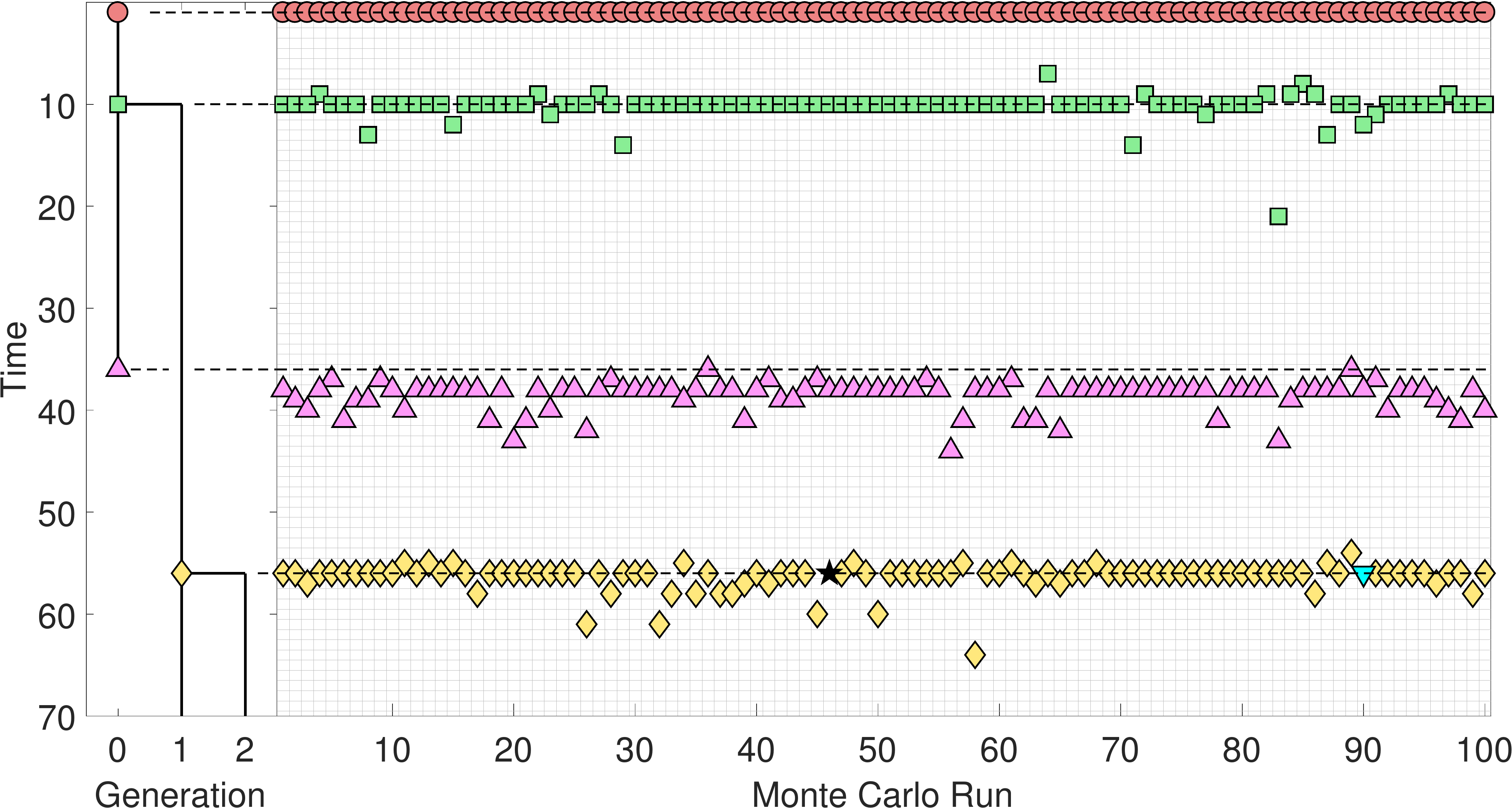}};
			
			% Track Labels
			\fill [white,opacity=0.6] (-2.45,1.74) rectangle node[black,pos=0.5,opacity=1.0] (L1) {\scriptsize (1,1)} (-1.85,1.96);
			\draw[->,line width=0.6] (L1.west) -- (-3.3,1.85);
			
			\fill [white,opacity=0.6] (-2.45,0.6) rectangle node[black,pos=0.5,opacity=1.0] (L2) {\scriptsize (1,1,10,1)} (-1.35,0.82);
			\draw[->,line width=0.6] (L2.west) -- (-3.0,0.71);
			
			\fill [white,opacity=0.6] (-2.15,-1.4) rectangle node[black,pos=0.5,opacity=1.0] (L3) {\scriptsize (1,1,10,1,56,1)} (-0.65,-1.62);
			\draw[->,line width=0.6] (L3.west) -- (-2.65,-1.51);

			% Legend - v2
			\coordinate (legCenter) at (0,-3);
			\coordinate (legLeft) at (-2.4,-2.8);
			\coordinate (legLeftBottom) at (-2.7,-3.2);
			
			\node[legBox] at (legCenter) (legend) {};
			\node[legEntry] at (legLeft) (E1) {\tikz\draw[black,fill=red!60] (0,0) circle (.5ex); \scriptsize Birth};
			
			\node[right=0.35cm of E1,draw, thin, fill=lightGreen, regular polygon, regular polygon sides =4, scale=0.42] (E2) {};
			\node[legEntry,below right=-0.28cm and 0.04cm of E2] (E3) {\scriptsize Spawn - Gen. 1};
			
			\node[right=0.35cm of E3,draw, thin, fill=yellah, diamond, scale=0.42] (E4) {};
			\node[legEntry,below right=-0.26cm and 0.06cm of E4] (E5) {\scriptsize Spawn - Gen. 2};
			
			\node[draw, thin, fill=Lavender, regular polygon, regular polygon sides =3, scale=0.35] at (legLeftBottom) (E6) {};
			\node[legEntry,right=0.04cm of E6] (E7) {\scriptsize Death};
			
			\node[below right=-0.26cm and 0.28cm of E7,draw, thin, fill=black, star, star points=5, star point ratio=2.25, scale=0.3] (E8) {};
			\node[legEntry,below right=-0.26cm and 0.07cm of E8] (E9) {\scriptsize No Spawn};
			
			\node[below right=-0.26cm and 1.02cm of E9,draw, thin, fill=cblue, regular polygon, regular polygon sides =3, rotate=180, scale=0.35] (E10) {};
			\node[legEntry,below right=-0.11cm and 0.18cm of E10] (E11) {\scriptsize Origin Error};
			
			\end{tikzpicture}
		}\\%
		\subfigure[Birth Region 2]{%
			\label{fig:anc_sub2}
			\begin{tikzpicture}
			\tikzstyle{legBox}=[draw, rectangle,minimum height=1.1cm, minimum width=6cm, fill=white]
			\tikzstyle{legEntry}=[draw, white,rectangle,fill=white,text=black]

			\node (ancestry) at (0,0)
			{\includegraphics[width = 0.93\columnwidth]{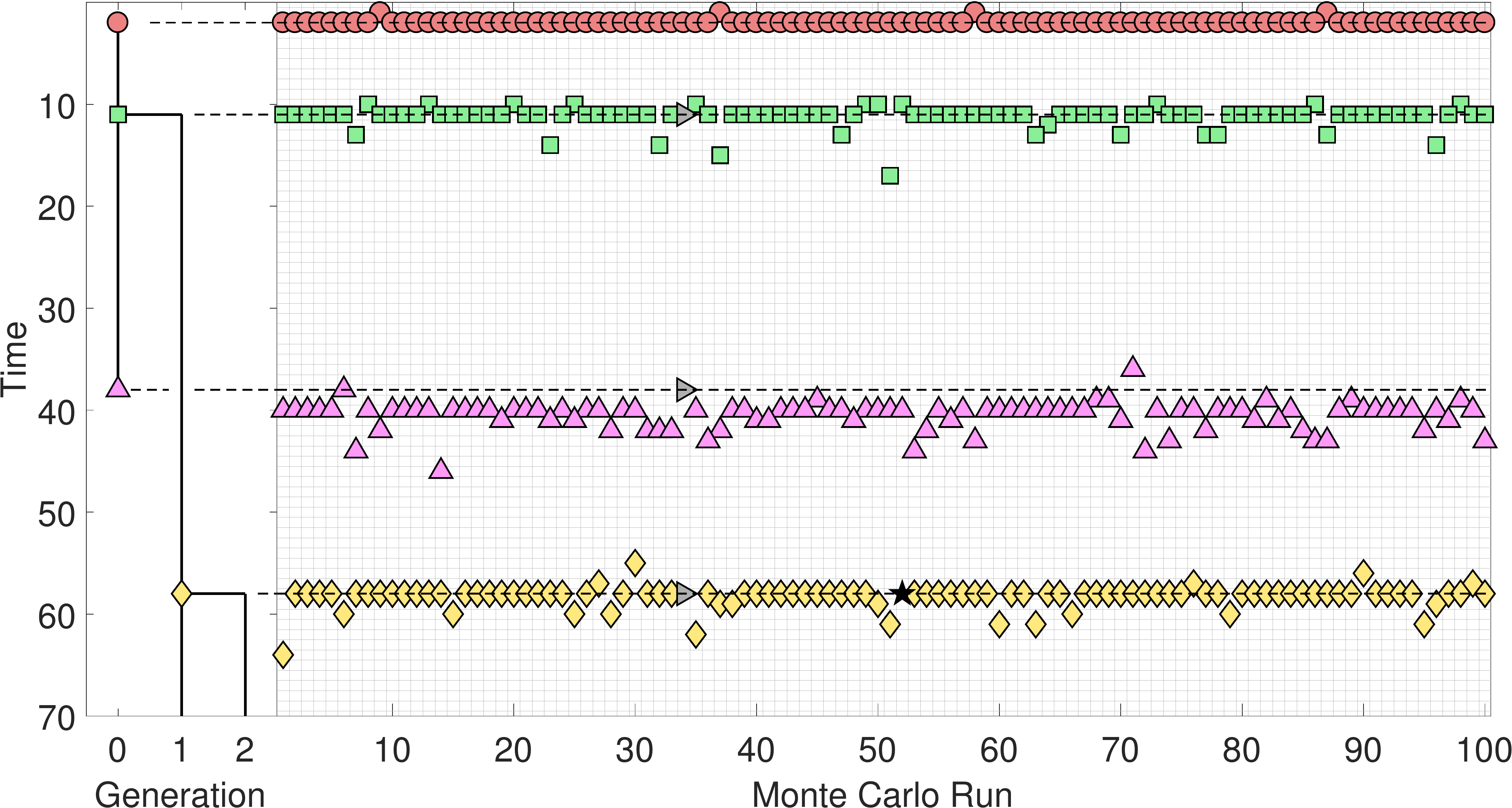}};
			
			% Track Labels
			\fill [white,opacity=0.6] (-2.35,1.68) rectangle node[black,pos=0.5,opacity=1.0] (L1) {\scriptsize (2,2)} (-1.75,1.90);
			\draw[->,line width=0.6] (L1.west) -- (-3.3,1.79);
			
			\fill [white,opacity=0.6] (-2.55,0.5) rectangle node[black,pos=0.5,opacity=1.0] (L2) {\scriptsize (2,2,11,1)} (-1.45,0.72);
			\draw[->,line width=0.6] (L2.west) -- (-3.0,0.61);
			
			\fill [white,opacity=0.6] (-2.15,-1.4) rectangle node[black,pos=0.5,opacity=1.0] (L3) {\scriptsize (2,2,11,1,58,1)} (-0.65,-1.62);
			\draw[->,line width=0.6] (L3.west) -- (-2.7,-1.51);

			% Legend - v2
			\coordinate (legCenter) at (0,-3);
			\coordinate (legLeft) at (-2.4,-2.8);
			\coordinate (legLeftBottom) at (-2.7,-3.2);
			
			\node[legBox] at (legCenter) (legend) {};
			\node[legEntry] at (legLeft) (E1) {\tikz\draw[black,fill=red!60] (0,0) circle (.5ex); \scriptsize Birth};
			
			\node[right=0.35cm of E1,draw, thin, fill=lightGreen, regular polygon, regular polygon sides =4, scale=0.42] (E2) {};
			\node[legEntry,below right=-0.28cm and 0.04cm of E2] (E3) {\scriptsize Spawn - Gen. 1};
			
			\node[right=0.35cm of E3,draw, thin, fill=yellah, diamond, scale=0.42] (E4) {};
			\node[legEntry,below right=-0.26cm and 0.06cm of E4] (E5) {\scriptsize Spawn - Gen. 2};
			
			\node[draw, thin, fill=Lavender, regular polygon, regular polygon sides =3, scale=0.35] at (legLeftBottom) (E6) {};
			\node[legEntry,right=0.04cm of E6] (E7) {\scriptsize Death};
			
			\node[below right=-0.26cm and 0.28cm of E7,draw, thin, fill=black, star, star points=5, star point ratio=2.25, scale=0.3] (E8) {};
			\node[legEntry,below right=-0.26cm and 0.07cm of E8] (E9) {\scriptsize No Spawn};
			
			\node[below right=-0.29cm and 1.02cm of E9,draw, thin, fill=Gray, regular polygon, regular polygon sides =3, rotate=270, scale=0.35] (E10) {};
			\node[legEntry,below right=-0.25cm and 0.18cm of E10] (E11) {\scriptsize Label Switch};
			
			\end{tikzpicture}
		}\\%
		\subfigure[Birth Region 3]{%
			\label{fig:anc_sub3}
			\begin{tikzpicture}
			\tikzstyle{legBox}=[draw, rectangle,minimum height=0.6cm, minimum width=7.5cm, fill=white]
			\tikzstyle{legEntry}=[draw, white,rectangle,fill=white,text=black]

			\node (ancestry) at (0,0)
			{\includegraphics[width = 0.93 \columnwidth]{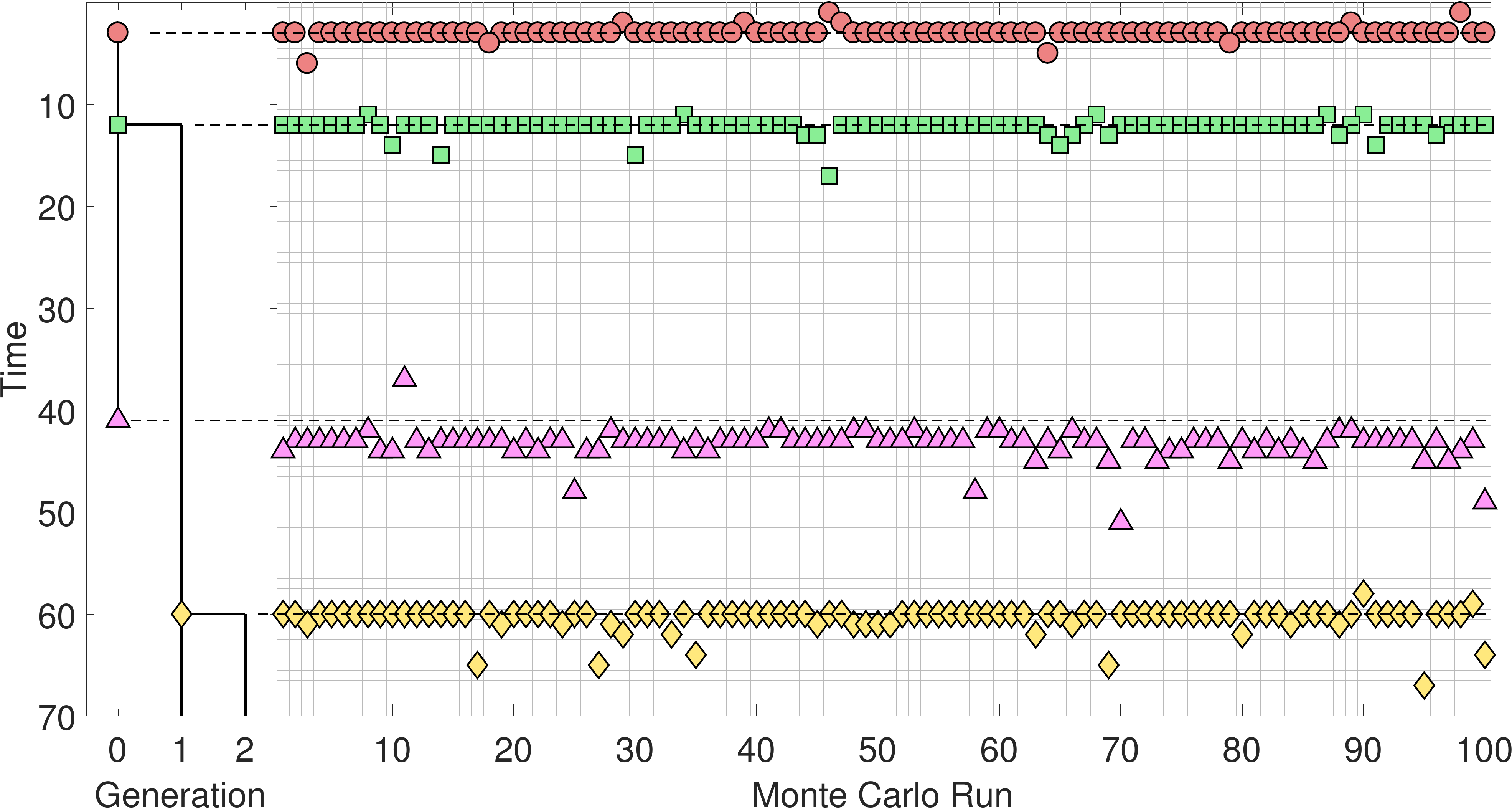}};
			
			% Track Labels
			\fill [white,opacity=0.6] (-2.15,1.58) rectangle node[black,pos=0.5,opacity=1.0] (L1) {\scriptsize (3,3)} (-1.55,1.80);
			\draw[->,line width=0.6] (L1.west) -- (-3.3,1.69);
			
			\fill [white,opacity=0.6] (-2.35,0.5) rectangle node[black,pos=0.5,opacity=1.0] (L2) {\scriptsize (3,3,12,1)} (-1.25,0.72);
			\draw[->,line width=0.6] (L2.west) -- (-3.0,0.61);
			
			\fill [white,opacity=0.6] (-2.15,-1.45) rectangle node[black,pos=0.5,opacity=1.0] (L3) {\scriptsize (3,3,12,1,60,1)} (-0.65,-1.67);
			\draw[->,line width=0.6] (L3.west) -- (-2.6,-1.56);
			
			% Legend - v1
			\coordinate (legCenter) at (0.3,-2.8);
			\coordinate (legLeft) at (-2.75,-2.8);

			\node[legBox] at (legCenter) (legend) {};
			\node[legEntry] at (legLeft) (E1) {\tikz\draw[black,fill=red!60] (0,0) circle (.5ex); \scriptsize Birth};
			
			\node[right=0.35cm of E1,draw, thin, fill=lightGreen, regular polygon, regular polygon sides =4, scale=0.42] (E2) {};
			\node[legEntry,below right=-0.28cm and 0.04cm of E2] (E3) {\scriptsize Spawn - Gen. 1};
			
			\node[right=0.35cm of E3,draw, thin, fill=Lavender, regular polygon, regular polygon sides =3, scale=0.35] (E4) {};
			\node[legEntry,right=0.04cm of E4] (E5) {\scriptsize Death};
			
			\node[right=0.35cm of E5,draw, thin, fill=yellah, diamond, scale=0.42] (E6) {};
			\node[legEntry,below right=-0.26cm and 0.06cm of E6] (E7) {\scriptsize Spawn - Gen. 2};
			
			\end{tikzpicture}
		}
	\end{center}
	\caption{Birth region ancestry truth and estimates. Each region's true ancestry tree is at the far left. A red circle at the top indicates birth, while the first and second generation spawn times are denoted by a green square and yellow diamond, respectively. The birth track's time of death is marked by a pink diamond. Markers within the gridded area indicate GLMB filter estimated birth, death, and spawn times for 100 Monte Carlo runs.}
	\label{fig:wholeFig}
\end{figure}%
We see from Fig.~\ref{fig:wholeFig} that the GLMB filter accurately captures ancestry information overall, where the results in Fig.~\ref{fig:anc_sub3} exhibit the best performance. Fig.~\ref{fig:anc_sub1} indicates that the second generation spawn track $(1,1,10,1,56,1)$ was not spawned during run $46$ and that the same generation track was estimated as originating from a different parent during run 90. In this specific case, track $(1,1,10,1)$ dropped prior to tracks crossing at the origin (see Fig.~\ref{fig:scenario}) and was later estimated as having spawned from one of the remaining first generation spawn tracks, subsequently spawning an object at time $k=56$. Similarly, Fig.~\ref{fig:anc_sub2} shows that track $(2,2,11,1,58,1)$ did not spawn during run 52, while during run 34, a label switch occurred when track $(2,2)$ essentially took the place of track $(2,2,11,1)$, going on to spawn track $(2,2,58,1)$ at time $k=58$. These inaccurate ancestry estimates are due to missed detections of either the parent track, spawn track, or both.%

\section{Conclusion \label{sec:conclusion}}

This paper presented the first GLMB filter to consider object spawning. Using a top-down formulation, a general labeled \ac{rfs} density characterizing the predicted multi-object density of surviving, birth and spawn objects was derived. A joint prediction-update was performed yielding a density that was then approximated to form a posterior GLMB density while preserving its cardinality and \ac{phd}. A key innovation of the proposed filter is the capacity of spawn track labels to encapsulate their ancestry. The filter's ability to instantiate 
new tracks originating from previously known objects was verified by simulation. Our results can potentially be extended to accommodate measurement-based birth and spawn models.

\FloatBarrier
\appendix
%%%%%%%%%%%%%%%%%%%%%%%%%%%%%%%%%%%%%%%%%%%%%%%%%%%%%%%% Filter Algorithm %%%%%%%%%%%%%%%%%%%%%%%%%%%%%%%%%%%%%%%%%%%%%%%%%%%%%%%%%%%%%%%%%%%%%%%%%%%%%%%
\begin{figure}[h!]
\hrule
\vspace{3pt} \textbf{Algorithm 1: Joint Prediction and Update with Spawning}
\par
\begin{itemize}
\item input:$\left\{\left(I^{(h)},\xi^{(h)},w^{(h)},p^{(h)}\right)\right%
\}_{h=1}^{H},\; Z_+,\; H_+^{\max}$,
\par
\item input: $\{(r_\bplus \xspace^{(\ell)},\; p_\bplus^{(\ell)})\}_{\ell\in\mathbb{B}_+ \xspace},\; p_\mathrm{S} \xspace,\; f_{\mathrm{S},+}\xspace(\cdot|\cdot)$,
\item input: $p_\mathrm{T} \xspace,\; f_{\mathrm{T},+} \xspace(\cdot|\cdot),\;\Tspace(\cdot), \kappa_+,\; p_{\mathrm{D} \xspace,+},\; g_+(\cdot|\cdot)$%
\par
\item output:$\left\{\left(I^{(h_+)},\xi^{(h_+)},w^{(h_+)},p^{(h_+)}\right)%
\right\}_{h_+=1}^{H_+}$%
\end{itemize}
\par
\vspace{3pt} \hrule
\vspace{3pt} 
\begin{algorithmic}[1]
		\item sample counts $[T_+^{(h)}]_{h=1}^{H}$ from a multinomial distribution with parameters $H_+^{\max}$ trials and weights $[w^\hex]_{h=1}^H$, 
		\item for $h=1:H$
		\item \tab generate $\Tspace(I^{(h)})\!\!=\!\!\{\{(\ell,k+1)\}\!\!\times\!\!\{1\!:\!M_\ell\}\!:\!\ell\in I^{(h)}\}$
		\item \tab initialize $\gamma^{(h,1)}$
		\item \tab compute $\eta^\hex \!\!=\!\! [\eta_i^\hex(j)]_{(i,j)=(1,-1)}^{(|\Bspace\cup I^\hex\cup \Tspace(I^{(h)})|, |Z_+|)}$ 
		\item[] \tab using~\eqref{eq:bstCost}
		\item \tab $\{\gamma^\htex\}_{t=1}^{\tilde{T}_+^\hex}:= \mathrm{ Unique}(\mathrm{Gibbs}(\gamma^{(h,1)}, T_+^\hex, \eta^\hex))$
		\item \tab for $t = 1:\tilde{T}_+^\hex$
		\item \tab\tab convert $\gamma^\htex$ to $(I_+^\htex,\theta_+^\htex)$ using~\eqref{eq:g2IT}
		\item \tab\tab compute $\bar{w}_+^{(h,t)}$ from $w^{(h)}$ and $I_+^\htex$ using~\eqref{eq:wBlast}
		\item \tab\tab compute and normalize $p_+^{(h,t)}$ using~\eqref{eq:p+last}
		\item \tab\tab $\xi_+^{(h,t)}=(\xi^{(h)},\theta_+^{(h,t)})$
		\item \tab end
		\item end
		\item compute $\hat{C}$ given in~\eqref{eq:normConst}
		\item compute $w_+^{(h,t)} = \bar{w}_+^{(h,t)} /\hat{C}$
		\item $\{(I_+^{(h_+)},\xi_+^{(h_+)},w_+^{(h_+)},p_+^{(h_+)})\}_{h_+=1}^{H_+}$
		\item[] \tab $:=\mathrm{Aggregate}(\{(I_+^\htex,\xi_+^{(h,t)},w_+^{(h,t)},p_+^{(h,t)})\}_{h,t=1,1}^{H,\tilde{T}_+^\hex})$
	\end{algorithmic}
\vspace{3pt} \hrule
\end{figure}
%%%%%%%%%%%%%%%%%%%%%%%%%%%%%%%%%%%%%%%%%%%%%%%%%%%%%%% Aggregate Algorithm %%%%%%%%%%%%%%%%%%%%%%%%%%%%%%%%%%%%%%%%%%%%%%%%%%%%%%%%%%%%%%%%%%%%%%%%%%%%%%%
\begin{figure}[t]
\hrule
\vspace{3pt} \textbf{Algorithm 2: Aggregate} \newline
From the bottom portion of \cite[Algorithm 2]{Vo_2016_fast_dGLMB} replicated
here for convenience, though with the addition of $\xi_+^{(h,t)} \xspace$
terms
\par
\begin{itemize}
\item input:$\{(I_+^{(h,t)} \xspace,\xi_+^{(h,t)},w_+^{(h,t)},p_+^{(h,t)})%
\}_{h,t=1,1}^{H,\tilde{T}_+^{(h)} \xspace} $%
\par
\item output: $\{(I_+^{(h_+)},\xi_+^{(h_+)},w_+^{(h_+)},p_+^{(h_+)})%
\}_{h_+=1}^{H_+}$%
\end{itemize}
\par
\vspace{3pt} \hrule
\vspace{3pt} 
\begin{algorithmic}[1]
		\item $(\{(I_+^{(h_+)},\xi_+^{(h_+)},p_+^{(h_+)})\}_{h_+=1}^{H_+},\sim,[U_{h,t}])$
		\item[] \tab\tab $:=\mathrm{Unique}(\{(I_+^\htex,\xi_+^{(h,t)},p_+^{(h,t)})\}_{h,t=1,1}^{H,\tilde{T}_+^\hex})$
		\item for $h_+ = 1:H_+$
		\item \tab $w_+^{(h_+)} = \sum\limits_{h,t:U_{h,t}=h_+}w_+^{(h,t)}$
		\item end
		\item normalize weights $\{w_+^{(h_+)}\}_{h_+=1}^{H_+}$
	\end{algorithmic}
\vspace{3pt} \hrule
\end{figure}

%================================= REFERENCES =================================
%\clearpage
\FloatBarrier
\bibliographystyle{IEEEtran}
\bibliography{IEEEabrv,L:/Research/Library/mendeley/library_new} % NEEEEEEED TOOOOOO FFFFFIIIIIXXXXXX TTTTTTHHHHHIIIIIISSSSS

\end{document}